\newcommand{\E}{\mathbb{E}}
\newcommand{\argmin}[1]{\underset{#1}{\operatorname{argmin}}\text{ }}
\newcommand{\ind}[1]{\mathbbm{1}_{#1}}
\newtheorem{Thm}{\underline{\bf Theorem}}
\newtheorem{Assume}{\underline{\bf Assumptions}}
\newtheorem{Lem}{\underline{\bf Lemma}}
\begin{document}

\thispagestyle{empty}
\baselineskip=28pt
\begin{center}
  {\LARGE{\bf A Flexible Procedure for Mixture Proportion Estimation in Positive--Unlabeled Learning}}
\end{center}

\baselineskip=12pt

\vskip 2mm
\begin{center}
  Zhenfeng Lin\\
  Department of Statistics, Texas A\&M University\\
  3143 TAMU, College Station, TX 77843-3143\\
  zflin@stat.tamu.edu\\
  \hskip 5mm \\
  James P. Long\\
  Department of Biostatistics, University of Texas MD Anderson Cancer Center\\
  P.O. Box 301402, Houston, TX 77230-1402\\
  jlong@stat.tamu.edu\\
\end{center}

\hskip 5mm

\begin{center}
  {\Large{\bf Abstract}}
\end{center}
\baselineskip=18pt


Positive--unlabeled (PU) learning considers two samples, a positive set $P$ with observations from only one class and an unlabeled set $U$ with observations from two classes. The goal is to classify observations in $U$. Class mixture proportion estimation (MPE) in $U$ is a key step in PU learning.  \cite{Blanchard2010} showed that MPE in PU learning is a generalization of the problem of estimating the proportion of true null hypotheses in multiple testing problems. Motivated by this idea, we propose reducing the problem to one dimension via construction of a probabilistic classifier trained on the $P$ and $U$ data sets followed by application of a one--dimensional mixture proportion method from the multiple testing literature to the observation class probabilities. The flexibility of this framework lies in the freedom to choose the classifier and the one--dimensional MPE method. We prove consistency of two mixture proportion estimators using bounds from empirical process theory, develop tuning parameter free implementations, and demonstrate that they have competitive performance on simulated waveform data and a protein signaling problem.

\baselineskip=12pt
\par\vfill\noindent
\underline{\bf Keywords}:
mixture proportion estimation;
PU learning;
classification;
empirical processes;
local false discovery rate;
multiple testing

\par\medskip\noindent
\underline{\bf Short title}: Mixture Proportion Estimation for PU Learning

\clearpage\pagebreak\newpage
\pagenumbering{arabic}
\newlength{\gnat}
\setlength{\gnat}{22pt}
\baselineskip=\gnat

\section{Introduction}\label{sec:intro}

Let
\begin{align}
  \label{eq:model1}
  &X_{1},\ldots,X_{n} \sim F = \alpha F_0 + (1-\alpha)F_1, \\
  &X_{L,1},\ldots,X_{L,m} \sim F_1, 
   \nonumber
\end{align}
all independent, where $F_0$ and $F_1$ are distributions on $\mathbb{R}^p$ with densities $f_0$ and $f_1$ with respect to measure $\mu$. The goal is to estimate $\alpha$ and the classifier
\begin{equation} \label{eq:true_classifier}
C_{01}(x) = \frac{(1-\alpha) f_1(x)}{\alpha f_0(x) + (1-\alpha) f_1(x)},
\end{equation}
which can be used to separate the unlabeled data $\{X_{i}\}_{i=1}^n$ into the classes $0$ and $1$. The above problem has been termed \textit{Learning from Positive and Unlabeled Examples}, \textit{Presence Only Data}, \textit{Partially Supervised Classification}, and the \textit{Noisy Label Problem} in the machine learning literature \citep{elkan2008learning,ward2009presence,Ramaswamy2016,scott2013classification,scott2015rate,liu2002partially}.  In this work, we use the term PU learning to refer to Model \eqref{eq:model1}. Here we denote the positive set $P := \{X_{L,i}\}_{i=1}^m$ and the unlabeled set $U := \{X_i\}_{i=1}^n$. This setting is more challenging than the traditional classification framework where one possesses labeled training data belonging to both classes. In particular $\alpha$ and $C_{01}$ are not generally identifiable from the $P$ and $U$ data. PU learning has been applied to text analysis \citep{liu2002partially}, time series \citep{Nguyen2011}, bioinformatics \citep{YangPeng2012}, ecology \citep{ward2009presence}, and social networks \citep{Chang2016}.  


Several strategies have been proposed for solving the PU problem. \cite{ward2009presence} assumes $\alpha$ is known and uses logistic regression to classify $U$. The SPY method of \cite{liu2002partially} classifies $U$ directly by identifying a ``reliable negative set.'' The SPY method has practical challenges including choosing the reliable negative set. Other strategies estimate $\alpha$ directly. \cite{Ramaswamy2016} estimate $\alpha$ via kernel embedding of distributions. \cite{scott2015rate} and \cite{Blanchard2010} estimate $\alpha$ using the ROC curve produced by a classifier trained on $P$ and $U$.


\cite{Blanchard2010} showed that MPE in the PU model is a generalization of estimating the proportion of true nulls in multiple testing problems. Specifically, suppose that $F_0$ and $F_1$ are one--dimensional distributions and $F_1$ is known. Then the unlabeled set $X_{1},\ldots,X_{n}$ may be interpreted as test statistics with the hypotheses:
\begin{align*}
  &H_0: X_{i} \sim F_1,\\
  &H_a: X_{i} \sim F_0.
\end{align*}
In this context, $1-\alpha$ is the proportion of true null hypotheses and the classifier $C_{01}$ is the local FDR \citep{efron2001empirical}. There are many works on addressing identifiability and estimation of $\alpha$ and $C_{01}$ in this simpler setting \citep{patra2015estimation,efron2012large,genovese2004stochastic,robin2007semi,meinshausen2006estimating}.

FDR $\alpha$ estimation methods have been developed for one--dimensional MPE problems and are not directly applicable on the multidimensional PU learning problem in which $X_i \in \mathbb{R}^p$. In this work, we show that the PU MPE problem can be reduced to dimension one by constructing a classifier on the P versus U data sets followed by transforming observations to class probabilities. One dimensional MPE methods from the FDR literature can then be applied to the class probabilities. Computer implementation of this approach is straightforward because one can use existing classifier and one--dimensional MPE algorithms. We prove consistency for adaptations of two one--dimensional MPE methods: \cite{Storey2002} based on empirical processes and \cite{patra2015estimation} based on isotonic regression. These proofs use results from empirical process theory. We show that the ROC method used in \cite{Blanchard2010} and \cite{scott2015rate} in the machine learning literature is a variant of the method proposed by Storey \cite{Storey2002} in the multiple testing literature. These results strengthen connections between the PU learning and multiple testing communities.

The rest of the paper is organized as follows. In Section \ref{sec:procedure} we give a sketch of the proposed procedure, which includes two proposed estimators C-PS and C-ROC. This section consists of three parts. First, a motivation of the procedure from the hypothesis testing perspective is explained. Second, identifiability of $\alpha$ is addressed. Third, a workflow is provided to explain how to implement the proposed procedure. In Section \ref{sec:identify} we show that Model \eqref{eq:model1} can be reduced to one-dimension with a classifier. In Section \ref{sec:estimation_alpha0} we show consistency of two $\alpha$ estimators. In Section \ref{sec:experiments} we numerically show that the estimators perform well in various settings. A conclusion is made in Section \ref{sec:conclusion}. Appendix \ref{sec:technical_notes} gives proofs of theorems in the paper. Supporting lemmas can be found in Appendix \ref{sec:lemmas}.

\section{Background and Proposed Procedure}\label{sec:procedure}

\subsection{Multiple Testing, FDR, and Estimating the Proportion of True Nulls}

Suppose one conducts $n$ tests of null hypothesis $H_{0}: X_i \sim F_1$ versus alternative hypothesis $H_{a}: X_i \sim F_0$, $i=1,\ldots, n$. The $X_i$ are typically test statistics or p--values and the null distribution $F_1$ is assumed known (usually $Unif[0,1]$ in the case of $X_i$ being p--values). The distribution of the $X_i$ are $F = \alpha F_0 + (1-\alpha) F_1$, where $1-\alpha$ is the proportion of true null hypotheses. The \textit{false discovery rate} (FDR) is the expected proportion of false rejections. If $R$ is the number of rejections and $V$ is the number of false rejections then $FDR \equiv \E[\frac{V}{R}\mathbf{1}_{R > 0}]$. \cite{BenjaminiHochberg95} developed a linear step--up procedure which bounds the FDR at a user specified level $\beta$. In fact, this procedure is conservative and results in an FDR $\leq \beta (1-\alpha) \leq \beta$. This conservative nature causes the procedure to have less power than other methods which control FDR at $\beta$. \textit{Adaptive} FDR control procedures first estimate $1-\alpha$ and then use this estimate to select a $\beta$ which ensures control at some specified level while maximizing power. Many estimators of $\alpha$ have been proposed \citep{patra2015estimation, Storey2002, Benjamini2006, langaas2005, blanchard2009adaptive, benjamini2000adaptive}.

There are two reasons why these procedures cannot be directly applied to the PU learning problem. First, many of the methods have no clear generalization to dimension greater than one because they require an ordering of the test statistics or p--values. Second, the distribution $F_1$ is assumed known where as in the PU learning problem we only have a sample from this distribution. The classifier dimension reduction procedure we outline in Section \ref{sec:workflow} addresses the first point by transforming the PU learning problem to 1--dimension. The theory we develop in Sections \ref{sec:identify} and \ref{sec:estimation_alpha0} addresses the second issue.

 
\subsection{Identifiability of $\alpha$ and $C_{01}$}

Many works in both the PU learning and multiple testing literature have discussed the non--identifiability of the parameters $\alpha$ and $F_0$. For any given $(\alpha,F_0)$ pair with $\alpha < 1$, one can find a $\gamma > 0$ such that $\alpha' \equiv \alpha + \gamma \leq 1$. Define $F_0' \equiv \frac{\alpha F_0 + \gamma F_1}{\alpha + \gamma}$. Then
\begin{equation*}
F = \alpha' F_0' + (1-\alpha')F_1,
\end{equation*}
which implies $(\alpha',F_0')$ and $(\alpha,F_0)$ result in the same distributions for $P$ and $U$.


To address this issue, we follow the approach taken by \cite{Blanchard2010} and \cite{patra2015estimation} and estimate a lower bound on $\alpha$ defined as
\begin{align} \label{def:alpha0}
\alpha_0 := \inf\left\lbrace \gamma \in (0,1]: \frac{F-(1-\gamma)F_1}{\gamma} \text{ is a c.d.f.}\right\rbrace.
\end{align}

The parameter $\alpha_0$ is identifiable. Recall the objective is to estimate
\begin{equation*}
C_{01}(x) = \frac{(1-\alpha) f_1(x)}{\alpha f_0(x) + (1-\alpha) f_1(x)}. 
\end{equation*}
Let $\pi=m/(m+n)$ be the proportion of labeled data. The classifier
\begin{equation*}
  C(x) = \frac{\pi f_1(x)}{\pi f_1(x) + (1-\pi)f(x)}
\end{equation*}
outputs the probability an observation is from the labeled data set at a given $x$. We can approximate $C$ by training a model on the $P$ versus $U$ data sets. The classifiers $C$ and $C_{01}$ are related through $\alpha$. To see this, note that after some algebra
\begin{equation*}
  \frac{f_1(x)}{f(x)} = \frac{C(x)}{1-C(x)} \frac{1-\pi}{\pi}.
\end{equation*}
Thus
\begin{equation*}
  C_{01}(x) = \frac{(1-\alpha) f_1(x)}{f(x)} = \frac{1-\pi}{\pi}\frac{C(x)}{1-C(x)}(1-\alpha).
\end{equation*}
Since $\alpha$ is not generally identifiable, neither is $C_{01}$. However the plug-in estimator, using $C_n$ (a classifier trained on $P$ versus $U$) and $\widehat{\alpha}_0$ (some estimator of $\alpha_0$),
\begin{equation*}
  \widehat{C}_{01}^0(x) \equiv \frac{1-\pi}{\pi}\frac{C_n(x)}{1-C_n(x)}(1-\widehat{\alpha}_0)
\end{equation*}
is an estimated upper bound for $C_{01}$. We can classify an unlabeled observation $X_i$ as being from $F_1$ if $\widehat{C}^0_{01}(X_{i}) > \frac{1}{2}$. The problem has now been reduced to estimation of $\alpha_0$. The classifier $C_n$ plays an important role in estimation of $\alpha_0$ as well, as shown in the following section.


\subsection{Workflow for $\alpha_0$ Estimation}
\label{sec:workflow}

\begin{figure}[H]
  \begin{center}
    \begin{includegraphics}[width=\textwidth]{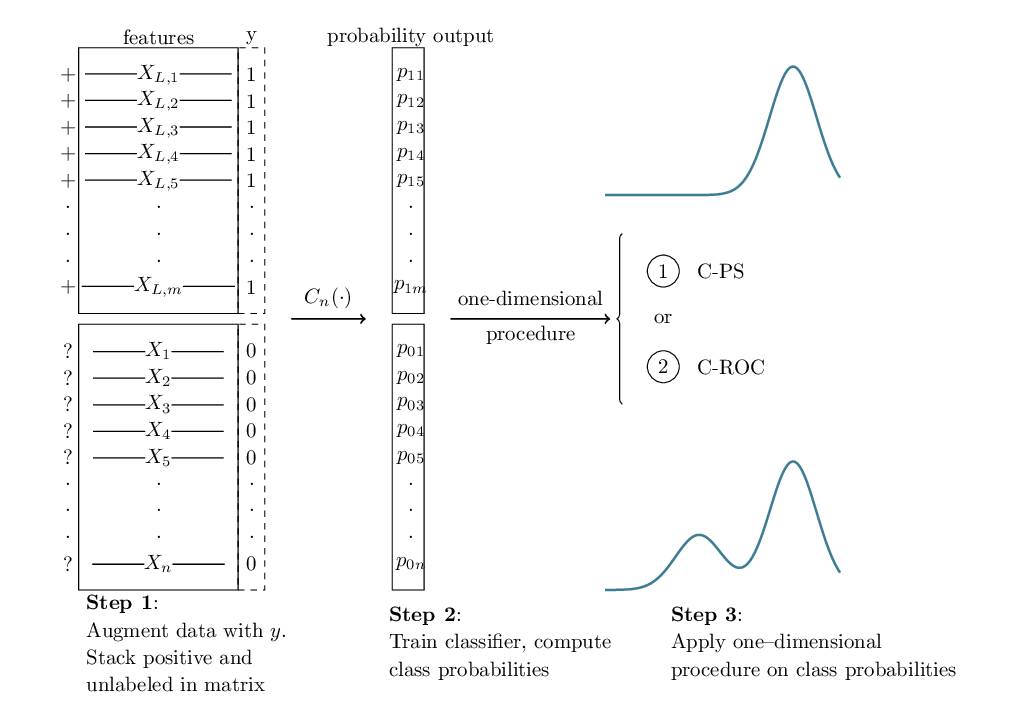}
      \caption{Workflow of proposed procedure. In \textbf{Step 1}, ``+'' denotes the positive samples, and ``?'' denotes the unlabeled samples of unknown class (can be ``+'' or ``-''). Stack the set $P$ and the set $U$ together as a large matrix, and add a new column $y$ to manually impose pseudo labels on observations: ``1'' for $X_{L,i}$ and ``0'' for $X_i$. In \textbf{Step 2}, a classifier $C_n(\cdot)$ is trained on the stacked matrix and the probability predictions ($y=1$ as reference) are obtained. In \textbf{Step 3}, a one-dimensional procedure is applied to the probability output from Step 2. In this paper, two methods C-PS and C-ROC are proposed. The upper density curve is used to demonstrate that the $\mathbf{p}_1:=\{p_{1i}\}_{i=1}^{m}$ are from one population, while the bottom density curve shows that $\mathbf{p}_0:=\{p_{0i}\}_{i=1}^{n}$ are from mixture of two populations.
      \label{fig:workflow}}
    \end{includegraphics}
  \end{center}
\end{figure}

The proposed procedure to estimate $\alpha_0$ in Model \eqref{eq:model1} is summarized in Figure \ref{fig:workflow}. The key idea of this procedure is to reduce the dimension of PU learning problem via the classifier $C_n$ trained on $P$ versus $U$ and then apply a one-dimensional MPE method on the transformed data to estimate $\alpha_0$. The procedure consists of three steps:

\begin{itemize}
\item \textbf{Step 1}. Label the $P$ samples with pseudo label ($Y=1$) and label the $U$ samples with pseudo label ($Y=0$). Hence we have $\tilde{P} := \{(X_{L,i},Y_i=1), i=1,\ldots, m\}$ and $\tilde{U} := \{(X_{i},Y_i=0), i=1,\ldots, n\}$.
\vspace{-0.2cm}
\item \textbf{Step 2}. Train a probabilistic classifier $C_n(\cdot) = \widehat{P}(Y=1|X=\cdot)$ on $\tilde{P}$ versus $\tilde{U}$. Compute probabilistic predictions: $\mathbf{p}_1 := \{p_{1i}, i=1,\ldots,m\}$ and $\mathbf{p}_0 :=\{p_{0i}, i=1,\ldots,n\}$, where $p_{1i} := C_n(X_{L,i})$ and $p_{0i} := C_n(X_{i})$.
\vspace{-0.2cm}
\item \textbf{Step 3}. Apply a one-dimensional MPE method to $\mathbf{p}_1$ and $\mathbf{p}_0$ to estimate $\alpha_0$.
\end{itemize}

We augment the original data with pseudo labels in Step 1, in order to use a supervised learning classification algorithm. In Step 2 we use Random Forest \citep{Breiman2001}. However in principle any classifier can be used. Note that the $p_{0i}$ and $p_{1i}$ are scalars. Hence in Step 3 we can utilize any one-dimensional method to estimate $\alpha_0$. In this work we adapt two methods -- one from \cite{Storey2002} and \cite{scott2015rate}, another from \cite{patra2015estimation}. Note that the original theory developed for these methods assumed that the null distribution is known, but in the PU problem we need to estimate it from $\mathbf{p_1}$. Since this setting is more complex, new theory is needed. In Section \ref{sec:estimation_alpha0}, we prove the consistency of two estimators in the PU setting, using Theorems \ref{thm:change_problems} and \ref{thm:CD_donsker}.

\section{Dimension Reduction via Classifier} \label{sec:identify}



Using the $P$ and $U$ samples we can make probabilistic predictions, i.e. compute the probability that the observation is from distribution $F_1$ versus from distribution $F$. The true classifier is
\begin{equation*}
C(x) = \frac{f_1(x)\pi}{f_1(x)\pi + f(x)(1-\pi)},
\end{equation*}
where $\pi = \frac{m}{m+n}$ is the proportion of labeled sample within the entire data. We treat $\pi$ as a known constant.


Denote the distribution of probabilistic predictions for $P$ and $U$, respectively, as
\begin{align*}
G_L(t) &= P(C(X) \leq t | X \sim F_1),\\
G(t) &= P(C(X) \leq t | X \sim F).
\end{align*}
One can consider the two-component mixture model
\begin{align} \label{eq:model2}
G = \alpha^G G_s + (1-\alpha^G)G_L,
\end{align}
for $\alpha^G$ and $G_s$, which are again potentially non-identifiable. Define
\begin{align} \label{def:alpha0_G}
\alpha_0^G := \inf\left\lbrace \gamma \in (0,1]: \frac{G-(1-\gamma)G_L}{\gamma} \text{ is a c.d.f.}\right\rbrace.
\end{align}

\begin{Thm} \label{thm:change_problems}
$\alpha_0^G = \alpha_0$.
\end{Thm}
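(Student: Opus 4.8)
The plan is to reduce the claim to an equality of two feasible sets and then prove the two inclusions separately. Writing the defining conditions measure-theoretically, set
\[
S := \{\gamma \in (0,1] : dF - (1-\gamma)\,dF_1 \text{ is a nonnegative measure}\}, \quad S^G := \{\gamma \in (0,1] : dG - (1-\gamma)\,dG_L \text{ is a nonnegative measure}\}.
\]
The condition that $\frac{F-(1-\gamma)F_1}{\gamma}$ be a c.d.f.\ is equivalent to membership of $\gamma$ in $S$: the candidate function already has limits $0$ and $1$ (it equals $0$ at $-\infty$ and $\frac{1-(1-\gamma)}{\gamma}=1$ at $+\infty$), so the only content is monotonicity, i.e.\ nonnegativity of the associated signed measure; the same holds for $\alpha_0^G$ and $S^G$. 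Since $\alpha_0 = \inf S$ and $\alpha_0^G = \inf S^G$, it suffices to show $S = S^G$.

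The inclusion $S \subseteq S^G$ I would obtain directly from the fact that $x \mapsto C(x)$ pushes $F$ forward to $G$ and $F_1$ forward to $G_L$. If $\gamma \in S$, then $\nu := dF - (1-\gamma)\,dF_1$ is a nonnegative measure on $\mathbb{R}^p$, hence its pushforward $C_\#\nu$ is a nonnegative measure on $[0,1]$; by linearity of the pushforward, $C_\#\nu = dG - (1-\gamma)\,dG_L$, so $\gamma \in S^G$. This direction uses nothing about $C$ beyond measurability.

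The reverse inclusion $S^G \subseteq S$ is the crux, and I would exploit that $C$ is a strictly increasing function of the likelihood ratio $\rho := f_1/f$. Dividing numerator and denominator of \eqref{eq:emp_classifier} by $f$ gives $C = \pi\rho/(\pi\rho + 1-\pi)$, which inverts to $\rho = r(C)$ with $r(t) := \tfrac{(1-\pi)t}{\pi(1-t)}$. Because $\alpha<1$ forces $f = \alpha f_0 + (1-\alpha)f_1 \ge (1-\alpha)f_1$, we have $F_1 \ll F$, so $\rho = dF_1/dF$ is well defined $F$-a.e.\ and $dF_1 = \rho\,dF$. The key identity is then $dG_L = r\,dG$, which follows by change of variables: for every Borel $A \subseteq [0,1]$,
\[
G_L(A) = F_1\big(C^{-1}(A)\big) = \int_{C^{-1}(A)} \rho\,dF = \int_{C^{-1}(A)} r(C)\,dF = \int_A r\,dG .
\]
Given $\gamma \in S^G$, nonnegativity of $dG - (1-\gamma)\,dG_L = \big(1-(1-\gamma)r\big)\,dG$ yields $(1-\gamma)\,r(t) \le 1$ for $G$-a.e.\ $t$; pulling this back through $\rho = r(C)$ and $G = C_\#F$ gives $(1-\gamma)f_1 \le f$ $F$-a.e. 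On $\{f=0\}$ we also have $f_1=0$ (again by $\alpha<1$), so the inequality holds $\mu$-a.e., i.e.\ $dF - (1-\gamma)\,dF_1 \ge 0$ and $\gamma \in S$.

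The main obstacle is establishing the density-ratio identity $dG_L = r\,dG$ rigorously; conceptually this asserts that $C$, being a monotone transform of the likelihood ratio, is a sufficient statistic for $\{F,F_1\}$, so the one-dimensional reduction loses no information relevant to the density-ratio constraint defining $\alpha_0$. The remaining effort is bookkeeping: confirming the candidate functions are genuine c.d.f.s, deducing $F_1 \ll F$ from $\alpha<1$, and checking the null set $\{f=0\}$ so that the $F$-a.e.\ inequality upgrades to the $\mu$-a.e.\ inequality demanded by the definition of $\alpha_0$.
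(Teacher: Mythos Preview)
Your proof is correct, and it rests on the same core observation as the paper's---that $C(x)$ is a strictly increasing transform of the likelihood ratio $\rho=f_1/f$---but the execution differs. The paper handles the two inclusions in an elementary way: for $S\subseteq S^G$ it writes the increment $\big(G(t_2)-(1-\gamma)G_L(t_2)\big)-\big(G(t_1)-(1-\gamma)G_L(t_1)\big)$ as $\int_{\{C(x)\in(t_1,t_2]\}}\big(f-(1-\gamma)f_1\big)\,d\mu$ and reads off nonnegativity; for $S^G\subseteq S$ it argues by contrapositive, observing that the bad set $A=\{f-(1-\gamma)f_1<0\}$ is exactly the super-level set $\{C(x)>t^*\}$ for the explicit threshold $t^*=\big(1+(1-\gamma)\tfrac{1-\pi}{\pi}\big)^{-1}$, and then checks that monotonicity of $G-(1-\gamma)G_L$ fails on $(t^*,1]$. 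Your route is more measure-theoretic: you push forward the signed measure for the easy inclusion, and for the hard one you derive the Radon--Nikodym identity $dG_L=r\,dG$ with $r(t)=\tfrac{(1-\pi)t}{\pi(1-t)}$, so that $dG-(1-\gamma)dG_L=(1-(1-\gamma)r)\,dG$ and nonnegativity becomes the pointwise bound $(1-\gamma)r\le 1$ $G$-a.e., which pulls back to $(1-\gamma)f_1\le f$ $F$-a.e. The paper's argument is shorter and avoids invoking $F_1\ll F$ or the $\{f=0\}$ null-set cleanup; your argument buys a cleaner conceptual picture (the result is really the sufficiency of $C$ for $\{F,F_1\}$) and makes transparent why the reduction to one dimension loses nothing. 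Both are valid; the paper's threshold $t^*$ is precisely the point where your density $1-(1-\gamma)r(t)$ vanishes.
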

See Section \ref{prf:change_problems} for a proof. Theorem \ref{thm:change_problems} shows one can solve the p--dimensional MPE problem \eqref{def:alpha0} by solving the 1--dimensional MPE problem \eqref{def:alpha0_G}. In what follows we use $\alpha_0$ instead of $\alpha_0^G$ to simplify notation. 


In practice, the classifier $C(X)$ is approximated by a trained model $C_n(X)$ on a given sample. For convenience, we assume the classifier $C_n(X)$ is trained using another independent sample $\mathcal{D}_n^\prime$. The $\mathcal{D}_n^\prime$ is omitted in the following to lighten notation. We require the approximated classifier to be a consistent estimator of the true classifier.

\begin{Assume} \label{ass:classifier_consistent}
We assume
\begin{align}
\E |C_n(X)-C(X)| = O\left( n^{-\tau} \right),
\end{align}
for some $\tau > 0$.
\end{Assume}

Such convergence results have been proven for a variety of probabilistic classifiers, including variants of Random Forest \citep{Biau2012}. Define


  \begin{align*}
    G_{L,n}(t) &:= \frac{1}{m}\sum_{i=1}^m \ind{C_n(X_{L,i}) \leq t}, \\
    G_n(t) &:= \frac{1}{n}\sum_{i=1}^n \ind{C_n(X_{i}) \leq t}.
  \end{align*}
Intuitively, $G_{L,n}$ and $G_n$ are approximate empirical distribution functions of $G_L$ and $G$ respectively. The approximation is due to the fact that $C$ is estimated with $C_n$. Thus we would expect Glivenko-Cantelli and Donsker properties for $G_{n}(t)$ and $G_{L,n}(t)$. However problems can arise when $C(X)$ is not continuous. Essentially convergence in probability for $C(X)$, implied by Assumptions \ref{ass:classifier_consistent}, only implies convergence of distribution functions at points of continuity. By assuming $G_L$ and $G$ possess densities, we can obtain uniform convergence of distribution functions.

\begin{Assume} \label{ass:classifier_continuous}
We assume that $G$ and $G_L$ are absolutely continuous and have bounded density functions $g$ and $g_L$. 
\end{Assume}

\begin{Thm} \label{thm:CD_donsker} 
Under Assumption \ref{ass:classifier_consistent} and \ref{ass:classifier_continuous}, for $\beta = min(\tau/3,1/2)$
    \begin{align*}
      &n^{\beta}(G_{L,n}(t) - G_{L}(t)) \text{ is } O_P(1),\\
      &n^{\beta}(G_{n}(t) - G(t)) \text{ is } O_P(1),
    \end{align*}
where both $O_P(1)$ are uniform in $t$.
\end{Thm}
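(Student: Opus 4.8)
The plan is to prove the statement for $G_n$; the argument for $G_{L,n}$ is identical after replacing the sample $\{X_i\}$ (drawn from $F$) by $\{X_{L,i}\}$ (drawn from $F_1$), using that since $F \geq (1-\alpha)F_1$ we have $\E_{F_1}|C_n(X)-C(X)| \leq (1-\alpha)^{-1}\E_{F}|C_n(X)-C(X)| = O(n^{-\tau})$ by Assumption \ref{ass:classifier_consistent}, together with $m \asymp n$. Throughout I would condition on the independent training sample $\mathcal{D}_n'$, so that $C_n$ is a fixed measurable function and the $X_i$ remain i.i.d.; every bound below then holds after also taking expectation over $\mathcal{D}_n'$.

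The first step is to separate the two sources of error by introducing the oracle empirical process built from the \emph{true} classifier, $\widetilde{G}_n(t) := \frac{1}{n}\sum_{i=1}^n \ind{C(X_i) \leq t}$, and writing
\begin{equation*}
\sup_t |G_n(t) - G(t)| \leq \sup_t |G_n(t) - \widetilde{G}_n(t)| + \sup_t |\widetilde{G}_n(t) - G(t)|.
\end{equation*}
The second term is a genuine empirical distribution function for $G$, so by the DKW inequality (equivalently, because half-lines form a VC class and are $G$-Donsker) it is $O_P(n^{-1/2})$ uniformly in $t$; after multiplication by $n^{\beta}$ with $\beta = \tau/3$ it is negligible whenever $\tau \leq 3/2$, which covers the regime of interest.

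The main work is the first term, which isolates the effect of estimating $C$ by $C_n$. The key pointwise inequality is that, for any $\epsilon > 0$,
\begin{equation*}
\left| \ind{C_n(X_i) \leq t} - \ind{C(X_i) \leq t} \right| \leq \ind{|C(X_i)-t| \leq \epsilon} + \ind{|C_n(X_i)-C(X_i)| > \epsilon},
\end{equation*}
since the two indicators differ only when $t$ lies between $C(X_i)$ and $C_n(X_i)$, forcing $|C(X_i)-t| \leq |C_n(X_i)-C(X_i)|$. Averaging and taking the supremum gives
\begin{equation*}
\sup_t |G_n(t) - \widetilde{G}_n(t)| \leq \sup_t \frac{1}{n}\sum_{i=1}^n \ind{|C(X_i)-t| \leq \epsilon} + \frac{1}{n}\sum_{i=1}^n \ind{|C_n(X_i)-C(X_i)| > \epsilon}.
\end{equation*}
For the first term, Assumption \ref{ass:classifier_continuous} bounds its mean by $G(t+\epsilon)-G(t-\epsilon) \leq 2\epsilon\|g\|_\infty$ uniformly in $t$, while its centered fluctuation is again $O_P(n^{-1/2})$ uniformly (interval indicators form a VC class). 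For the second term, which does not depend on $t$, Markov's inequality together with Assumption \ref{ass:classifier_consistent} gives mean $P(|C_n(X)-C(X)|>\epsilon) \leq \epsilon^{-1}\E|C_n(X)-C(X)| = O(\epsilon^{-1}n^{-\tau})$, hence the term is $O_P(\epsilon^{-1}n^{-\tau})$.

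Collecting the three scales — a density-driven bias of order $\epsilon$, an empirical fluctuation of order $n^{-1/2}$, and a classifier-error mass of order $\epsilon^{-1}n^{-\tau}$ — and choosing the bandwidth $\epsilon = \epsilon_n = n^{-\tau/3}$ makes the bias term $O(n^{-\tau/3})$ and the classifier-error term $O_P(n^{-2\tau/3})$, while the $n^{-1/2}$ contributions vanish after scaling (for $\tau \leq 3/2$). This yields $n^{\tau/3}\sup_t|G_n(t)-G(t)| = O_P(1)$, which is the claim. I expect the delicate step to be the classifier-error term: because $C$ need not be continuous, the $L^1$-consistency of $C_n$ in Assumption \ref{ass:classifier_consistent} does not on its own transfer to uniform convergence of the induced distribution functions, and it is precisely the bounded-density Assumption \ref{ass:classifier_continuous} that turns the $\epsilon$-neighborhood of the threshold into vanishing mass and lets the two assumptions be traded off through the choice of $\epsilon_n$.
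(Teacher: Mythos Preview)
Your proof is correct and follows essentially the same strategy as the paper: both use the key $\epsilon$-band inequality $|\ind{C_n(X)\le t}-\ind{C(X)\le t}|\le \ind{|C(X)-t|\le\epsilon}+\ind{|C_n(X)-C(X)|>\epsilon}$, bound the first piece via the bounded density in Assumption~\ref{ass:classifier_continuous} and the second via Markov and Assumption~\ref{ass:classifier_consistent}, and balance with $\epsilon_n=n^{-\tau/3}$. The only cosmetic difference is where the sampling fluctuation is absorbed---the paper centers at the conditional mean $\E[\ind{C_n(X)\le t}\mid C_n]$ and applies DKW there, whereas you center at the oracle empirical $\widetilde{G}_n$ and pick up an extra (harmless) VC fluctuation from the interval class; your explicit handling of the $G_{L,n}$ case via $f\ge(1-\alpha)f_1$ is in fact more careful than the paper's one-line remark.
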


See Section \ref{prf:CD_donsker} for a proof. The result from Theorem \ref{thm:CD_donsker} is the key step in showing consistency of our $\alpha_0$ estimators in the following sections.

\section{Estimation of $\alpha_0$}
\label{sec:estimation_alpha0}

We generalize a one--dimensional MPE method of Patra and Sen \cite{patra2015estimation} to the PU learning problem. We term the method C-PS to emphasize the fact that the method developed by Patra and Sen is applied to the output of a classifier. Then we generalize a one--dimensional method of Storey \cite{Storey2002} to the PU learning problem. We term the method C-ROC because the ROC method developed in \cite{Blanchard2010} and \cite{scott2015rate} can be viewed as a variant of the Story's \cite{Storey2002} original idea.

\subsection{C-PS}
\label{subsec:patra/sen}

\cite{patra2015estimation} remove as much of the $G_{L,n}$ distribution from $G_n$ as possible, while ensuring that the difference is close to a valid cumulative distribution function. We briefly review the idea and provide theoretical results to support use of this procedure in the PU learning problem. See \cite{patra2015estimation} for a fuller description of the method in the one--dimensional case.

For any $\gamma \in (0,1]$ define
  \begin{equation*}
    \widehat{G}_{s,n}^\gamma = \frac{G_n - (1-\gamma)G_{L,n}}{\gamma}.
  \end{equation*}
  If $\gamma \geq \alpha_0$, $\widehat{G}_{s,n}^\gamma$ will be a valid c.d.f. (up to sampling uncertainty) while the converse is true if $\gamma < \alpha_0$. Find the closest valid c.d.f. to $\widehat{G}_{s,n}^\gamma$ defined as 
\begin{equation}
\label{eq:iso_reg}
\check{G}_{s,n}^\gamma = \argmin{\text{all c.d.f. } W(t)} \int \left( \widehat{G}_{s,n}^\gamma(t) - W(t) \right)^2 d G_n(t).
\end{equation}
Isotonic regression is used to solve Equation \ref{eq:iso_reg}. Measure the distance between two c.d.f $W_1$ and $W_2$ as
\begin{equation*}
  d_n(W_1,W_2)  = \sqrt{\int \left(W_1(t)-W_2(t) \right)^2d G_n(t)}. 
\end{equation*}
If $d_n(\widehat{G}_{s,n}^\gamma,\check{G}_{s,n}^\gamma) \approx 0$, then $\alpha_0 \leq \gamma$ where the level of approximation is a function of the estimation uncertainty and thus the sample size. Given a sequence $c_n$ define
\begin{equation*}
  \widehat{\alpha}_0^{c_n} = \inf \left\lbrace \gamma \in (0,1]: \gamma d_n(\widehat{G}_{s,n}^\gamma,\check{G}_{s,n}^\gamma) \leq \frac{c_n}{n^{\beta-\eta}}\right\rbrace
\end{equation*}
where $\eta \in (0, \beta)$ is a constant and the rate $\beta$ is from Theorem \ref{thm:CD_donsker}.


\begin{Thm} \label{thm:PS_consistency} 
Under Assumptions \ref{ass:classifier_consistent} and \ref{ass:classifier_continuous},
if $c_n = o(n^{\beta-\eta})$ and $c_n \rightarrow \infty$, then $\widehat{\alpha}_0^{c_n} \xrightarrow{p} \alpha_0$.
\end{Thm}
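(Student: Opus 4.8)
The plan is to prove the two one-sided statements $P(\widehat{\alpha}_0^{c_n} > \alpha_0 + \epsilon) \to 0$ and $P(\widehat{\alpha}_0^{c_n} < \alpha_0 - \epsilon) \to 0$ for every $\epsilon > 0$, exploiting a population version of the criterion together with the uniform $O_P(n^{-\beta})$ rates from Theorem \ref{thm:CD_donsker}. Writing $\widehat{d}_n(\gamma) := \gamma\, d_n(\widehat{G}_{s,n}^\gamma, \check{G}_{s,n}^\gamma) = \gamma \inf_{W} \| \widehat{G}_{s,n}^\gamma - W\|_{L^2(G_n)}$, with the infimum over all c.d.f.s $W$, I define the population analogue $d(\gamma) := \gamma \inf_W \|\widehat{G}_s^\gamma - W\|_{L^2(G)}$, where $\widehat{G}_s^\gamma := (G - (1-\gamma)G_L)/\gamma$. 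Two facts I would borrow from \cite{patra2015estimation} are: (i) the identification property that $d(\gamma) = 0$ for $\gamma \geq \alpha_0$ while $d(\gamma) > 0$ for $\gamma < \alpha_0$, which holds because $\widehat{G}_s^\gamma$ is a genuine c.d.f. exactly when $\gamma \geq \alpha_0$ and the set of c.d.f.s is closed in $L^2(G)$; and (ii) the deterministic monotonicity of $\gamma \mapsto \widehat{d}_n(\gamma)$ arising from the isotonic construction, which reduces statements about the infimum defining $\widehat{\alpha}_0^{c_n}$ to single values of $\gamma$.

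For the upper bound I would evaluate the criterion at a fixed $\gamma^* > \alpha_0$, so that $\widehat{G}_s^{\gamma^*}$ is a valid c.d.f. and attainment of the infimum at $\alpha_0$ need not be discussed. Since $\widehat{G}_s^{\gamma^*}$ is a feasible competitor in the isotonic problem, $\widehat{d}_n(\gamma^*) \leq \gamma^* \|\widehat{G}_{s,n}^{\gamma^*} - \widehat{G}_s^{\gamma^*}\|_{L^2(G_n)} = \|(G_n - G) - (1-\gamma^*)(G_{L,n} - G_L)\|_{L^2(G_n)} \leq \|(G_n - G) - (1-\gamma^*)(G_{L,n} - G_L)\|_\infty$, where the last inequality uses $\|\cdot\|_{L^2(G_n)} \leq \|\cdot\|_\infty$ because $G_n$ is a probability measure. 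Theorem \ref{thm:CD_donsker} bounds the right-hand side by $O_P(n^{-\beta})$. Because $c_n \to \infty$ forces the threshold $c_n/n^{\beta-\eta} = c_n n^{\eta}\cdot n^{-\beta}$ to dominate $n^{-\beta}$, the event $\{\widehat{d}_n(\gamma^*) \leq c_n/n^{\beta-\eta}\}$ has probability tending to one, so $\gamma^*$ is feasible and $\widehat{\alpha}_0^{c_n} \leq \gamma^*$ with probability tending to one; letting $\gamma^* \downarrow \alpha_0$ gives the upper half.

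For the lower bound, fix $\epsilon > 0$ and set $\gamma = \alpha_0 - \epsilon$, so $d(\gamma) = \delta > 0$ by (i). By monotonicity (ii) it suffices to show $P(\widehat{d}_n(\gamma) > c_n/n^{\beta-\eta}) \to 1$, since then every $\gamma' \leq \gamma$ is infeasible and hence $\widehat{\alpha}_0^{c_n} \geq \gamma$. I therefore need a lower bound on $\widehat{d}_n(\gamma)$, which I would obtain by a reverse-triangle argument in $L^2(G_n)$: for every c.d.f. $W$, $\|\widehat{G}_{s,n}^\gamma - W\|_{L^2(G_n)} \geq \|\widehat{G}_s^\gamma - W\|_{L^2(G_n)} - \|\widehat{G}_{s,n}^\gamma - \widehat{G}_s^\gamma\|_{L^2(G_n)}$, where the subtracted term is at most $\gamma^{-1}\|(G_n - G) - (1-\gamma)(G_{L,n} - G_L)\|_\infty = O_P(n^{-\beta})$. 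Taking the infimum over $W$ (the subtracted term is free of $W$) and then replacing the integrating measure $G_n$ by $G$ converts the leading term into $d(\gamma)/\gamma$ up to an error controlled by $\|G_n - G\|_\infty$. The upshot is $\widehat{d}_n(\gamma) \geq d(\gamma) - o_P(1) = \delta - o_P(1)$, which exceeds the vanishing threshold $c_n/n^{\beta-\eta} \to 0$ (using $c_n = o(n^{\beta-\eta})$) with probability tending to one.

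The main obstacle is the measure-swap step in the lower bound: replacing $\int (\widehat{G}_s^\gamma - W)^2 \, dG_n$ by $\int (\widehat{G}_s^\gamma - W)^2 \, dG$ uniformly over all candidate c.d.f.s $W$. I would handle it by integration by parts, $|\int h\, d(G_n - G)| \leq \|G_n - G\|_\infty \, \mathrm{TV}(h)$ with $h = (\widehat{G}_s^\gamma - W)^2$, and the point to verify is that $\mathrm{TV}((\widehat{G}_s^\gamma - W)^2)$ is bounded uniformly over c.d.f.s $W$; this holds because $\widehat{G}_s^\gamma$ has finite total variation while $W \mapsto W^2$ and $W \mapsto \widehat{G}_s^\gamma W$ have uniformly bounded variation over monotone $[0,1]$-valued $W$. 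A secondary point to isolate is the identification claim (i) and the monotonicity (ii); both are inherited from \cite{patra2015estimation}, but I would restate them for the pair $(G, G_L)$ and verify that absolute continuity of $G$ and $G_L$ (Assumption \ref{ass:classifier_continuous}) is what upgrades $d(\gamma) \geq 0$ to the strict lower bound $\delta > 0$ used above.
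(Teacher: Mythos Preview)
Your proposal is correct and follows essentially the same architecture as the paper: the same two--sided split, the same reduction to a single $\gamma$ via monotonicity (the paper packages this as convexity of the feasible set $B_n$ in Lemma~\ref{lem:PS_convex}), and the same competitor argument for the upper bound (the paper's Lemma~\ref{lem:PS_distance} combined with Lemma~\ref{lem:CVM_rate} is exactly your inequality $\widehat{d}_n(\gamma^*) \leq \|(G_n-G)-(1-\gamma^*)(G_{L,n}-G_L)\|_\infty$).

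The only substantive difference is in the lower bound. The paper, in the second half of Lemma~\ref{lem:PS_distance}, argues somewhat informally that for $\gamma<\alpha_0$ the quantity $\gamma d_n(\widehat{G}_{s,n}^\gamma,\check{G}_{s,n}^\gamma)$ converges to a positive constant, invoking only pointwise convergence $\widehat{G}_{s,n}^\gamma\to G_s^\gamma$ and the fact that $G_s^\gamma$ is not a c.d.f. You instead make this rigorous via a reverse triangle inequality plus a measure--swap from $G_n$ to $G$, controlling the swap by $\|G_n-G\|_\infty\cdot\mathrm{TV}\big((\widehat{G}_s^\gamma-W)^2\big)$ uniformly over c.d.f.s $W$. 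Your TV bound is valid (both $\widehat{G}_s^\gamma$ and $W$ are bounded with uniformly bounded total variation, and products/squares of such functions inherit uniform TV bounds), so this step goes through. What you gain is a cleaner justification of the strict lower bound $\widehat{d}_n(\gamma)\geq\delta-o_P(1)$; what the paper gains is brevity. Neither route needs anything beyond Theorem~\ref{thm:CD_donsker} and the closedness of the set of c.d.f.s in $L^2(G)$, so they are variants of the same proof rather than different arguments.
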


The proof, contained in Section \ref{prf:PS_consistency}, is a generalization of results in \cite{patra2015estimation} which accounts for the fact that both $G_n$ and $G_{L,n}$ are estimators. While Theorem \ref{thm:PS_consistency} provides consistency, there are a wide range of choices of $c_n$. \cite{patra2015estimation} showed that $\gamma d_n(\widehat{G}_{s,n}^\gamma,\check{G}_{s,n}^\gamma)$ is convex, non-increasing and proposed letting $\widehat{\alpha}_0$ be the $\gamma$ that maximizes the second derivative of $\gamma d_n(\widehat{G}_{s,n}^\gamma,\check{G}_{s,n}^\gamma)$. We use this implementation in our numerical work in Section \ref{sec:experiments}.

\subsection{C-ROC} 
\label{subsec:storey}

Recalling the definitions of $G$, $G_s$, and $G_L$ from Section \ref{sec:identify}, note
\begin{equation*}
  G(t) = \alpha G_s(t) + (1-\alpha)G_L(t) \leq \alpha + (1-\alpha)G_L(t)
\end{equation*}
for all $t$. Thus for any $t$ such that $G_L(t)\neq 1$ we have
\begin{equation*}
  k(t) \equiv \frac{G(t) - G_L(t)}{1-G_L(t)} \leq \alpha.
\end{equation*}
In the FDR literature, $G_L$ is the distribution of the test statistic or p--value under the null hypothesis and is generally assumed known. Thus only $G$ must be estimated, usually with the empirical cumulative distribution function.  \cite{Storey2002} proposed an estimator for $k(t)$ at fixed $t$ (Equation 6) and determined a bootstrap method to find the $t$ which produces the best estimates of the FDR.

The PU problem is more complicated in that one must estimate $G$ and $G_L$. However the structure of $G$ and $G_L$ enables one to estimate the identifiable parameter $\alpha_0$. Specifically with $t^* = \inf \{ t : G_L(t) \geq 1\}$ we have
\begin{equation}
  \label{eq:sugg}
  \lim_{t \uparrow t^*} k(t) = \alpha_0.
\end{equation}
See Lemma \ref{lem:ktlim} for a proof. This result suggests estimating $\alpha_0$ by substituting the empirical estimators of $G_n$ and $G_{L,n}$ into Equation \ref{eq:sugg} along with a sequence $\widehat{t}$ which is converging to the (unknown) $t^*$. Such a sequence $\widehat{t}$ must be chosen so that the estimated denominator $1-\widehat{G}_{L,n}(\widehat{t})$ is not converging to $0$ too fast (and hence too variable). For $\widehat{t}$ we use a quantile of the empirical c.d.f. which is converging to $1$, but at a rate slower than the convergence of the empirical c.d.f.. For some $q \in (0,\beta)$, define

\begin{equation*}
  \widehat{t} = \inf \{ t : G_{L,n}(t) \geq 1-n^{-q}\} - n^{-1}.
\end{equation*}
The $n^{-1}$ term in $\widehat{t}$ avoids technical complications.
\begin{Thm}
  \label{thm:storey}
  Under Assumptions \ref{ass:classifier_consistent} and \ref{ass:classifier_continuous}
  \begin{equation*}
k_n(\widehat{t}) \equiv \frac{G_n(\widehat{t}) - G_{L,n}(\widehat{t})}{1-G_{L,n}(\widehat{t})} \xrightarrow{P} \alpha_0.
  \end{equation*}
\end{Thm}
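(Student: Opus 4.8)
The plan is to combine two facts established earlier: the uniform (in $t$) convergence of the estimated empirical distributions in Theorem \ref{thm:CD_donsker}, which lets me replace $G_n, G_{L,n}$ by $G, G_L$ even at the data--dependent threshold $\hat t$, and the left limit $\lim_{t \uparrow t^*} k(t) = \alpha_0$ of Lemma \ref{lem:ktlim}. Write $k(t) = (G(t) - G_L(t))/(1 - G_L(t))$, $R_n = 1 - G_L(\hat t)$, $D_n = 1 - G_{L,n}(\hat t)$, and $s_n = \inf\{t : G_{L,n}(t) \ge 1 - n^{-q}\}$, so that $\hat t = s_n - n^{-1}$.

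First I locate $\hat t$. Since $G_{L,n}$ is nondecreasing and right--continuous and $\hat t < s_n$, the definition of $s_n$ gives $G_{L,n}(\hat t) < 1 - n^{-q}$, so the empirical denominator obeys the deterministic bound $D_n > n^{-q}$. Theorem \ref{thm:CD_donsker} gives $G_{L,n}(\hat t) - G_L(\hat t) = O_P(n^{-\beta})$ uniformly, hence $R_n = D_n + O_P(n^{-\beta}) \ge n^{-q}\bigl(1 - O_P(n^{-(\beta-q)})\bigr)$, and because $q < \beta$ we obtain $R_n \ge \tfrac12 n^{-q}$ and $G_L(\hat t) < 1$ (i.e. $\hat t < t^*$) with probability tending to one. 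For the other side fix $\epsilon > 0$: by Assumption \ref{ass:classifier_continuous} $G_L$ is continuous and strictly below $1$ on $(-\infty, t^*)$, so $c := 1 - G_L(t^* - \epsilon) > 0$, and uniform convergence gives $G_{L,n}(t^* - \epsilon) = 1 - c + O_P(n^{-\beta}) < 1 - n^{-q}$ with probability tending to one, forcing $s_n \ge t^* - \epsilon$ and thus $\hat t \ge t^* - \epsilon - n^{-1}$. Combining, $\hat t \in (t^* - 2\epsilon, t^*)$ with probability tending to one, so $\hat t \xrightarrow{P} t^*$ from the left; Lemma \ref{lem:ktlim} then yields $k(\hat t) \xrightarrow{P} \alpha_0$, in particular $k(\hat t) = O_P(1)$.

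It remains to transfer from $k(\hat t)$ to $k_n(\hat t)$. With $\delta_1 = (G_n - G_{L,n})(\hat t) - (G - G_L)(\hat t)$ and $\delta_2 = G_L(\hat t) - G_{L,n}(\hat t)$, both $O_P(n^{-\beta})$ uniformly by Theorem \ref{thm:CD_donsker}, dividing numerator and denominator of $k_n(\hat t)$ by $R_n$ gives
\[
  k_n(\hat t) = \frac{k(\hat t) + \delta_1/R_n}{1 + \delta_2/R_n}.
\]
The bound $R_n \ge \tfrac12 n^{-q}$ makes $\delta_j/R_n = O_P(n^{-(\beta-q)}) = o_P(1)$, and since $k(\hat t) = O_P(1)$ this yields $k_n(\hat t) = k(\hat t) + o_P(1)$. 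Together with $k(\hat t) \xrightarrow{P} \alpha_0$ this gives $k_n(\hat t) \xrightarrow{P} \alpha_0$.

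The main obstacle is that $k_n(\hat t)$ is a ratio of two quantities that both tend to zero, so pointwise consistency of $G_n$ and $G_{L,n}$ is by itself useless; everything hinges on the gap $q < \beta$, which forces the $O_P(n^{-\beta})$ sampling fluctuations to be negligible against the $n^{-q}$ floor on the denominator, and on the uniformity in $t$ of Theorem \ref{thm:CD_donsker}, which is what allows the population quantities to be substituted at the random point $\hat t$.
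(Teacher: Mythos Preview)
Your proof is correct and follows essentially the same route as the paper: bound the denominator below by (a constant times) $n^{-q}$ using the definition of $\hat t$, invoke the uniform $O_P(n^{-\beta})$ bounds of Theorem~\ref{thm:CD_donsker} to control the ratio, and then localize $\hat t$ in $(t^*-\epsilon,t^*)$ so that Lemma~\ref{lem:ktlim} applies. The only cosmetic difference is that the paper divides through by the empirical denominator $D_n=1-G_{L,n}(\hat t)$ and writes $k_n(\hat t)=A\,k(\hat t)+o_P(1)$ with $A=R_n/D_n$, whereas you divide by the population quantity $R_n=1-G_L(\hat t)$ after first transferring the lower bound $D_n>n^{-q}$ to $R_n\ge \tfrac12 n^{-q}$; the two decompositions are equivalent.
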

See Section \ref{prf:storey} for a proof.

\begin{figure}[t]
\centering
\includegraphics[width=1\textwidth,height=6.5cm]{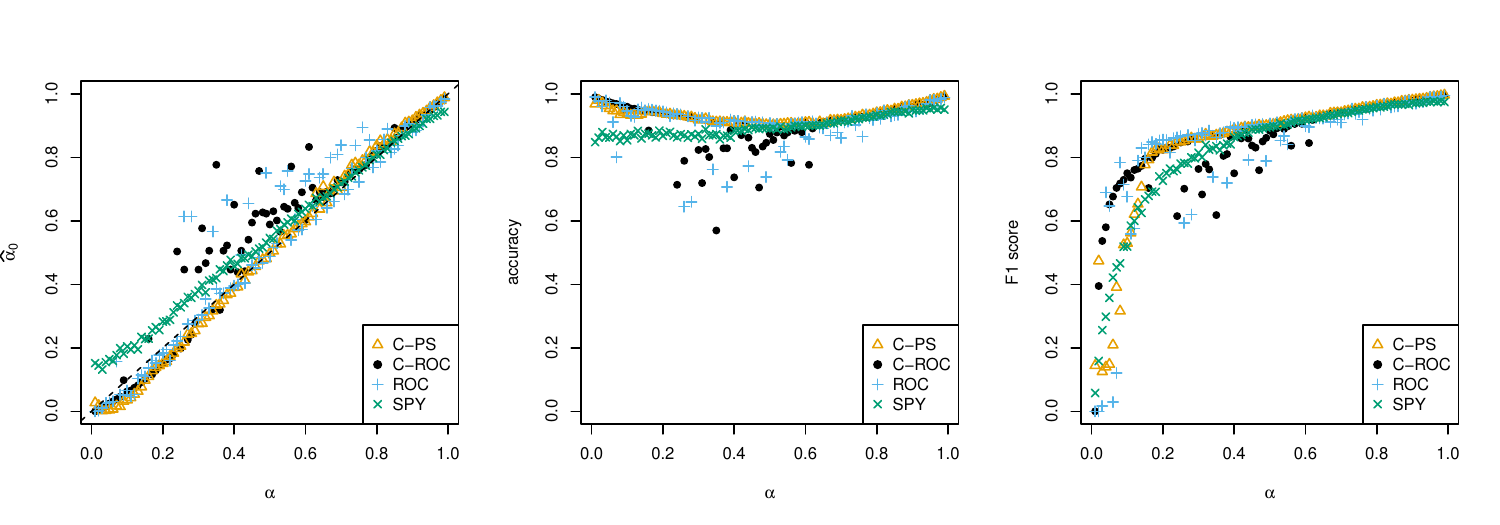}
{\caption{Comparison of methods with different $\alpha$ values. On the x-axis, $\alpha$ varies from 0.01 to 0.99 by step size 0.01. The left plot displays the estimates of the lower bound $\alpha_0$. The middle plot displays the accuracy of classifying observations in $U$. The right plot displays the F1 score of the classifications.}\label{fig:figA}}
\end{figure}

\subsubsection{Connection with ROC Method}
\label{sec:roc_comp}

The ROC method of \cite{scott2015rate} solves a generalization of the PU learning problem in which the positive set contains mislabeled data. When this method is specialized to the case of no misclassification in the labeled data i.e. the PU learning problem, it becomes a variant of the \cite{Storey2002} method with a particular cutoff value $t$. Specifically, define the true ROC curve by the parametric equation
\begin{equation*}
  \{(G_L(t),G(t)) : t \in [0,1]\}.
\end{equation*}
\cite{scott2015rate} (Proposition 2) showed that $\alpha_0$ is the supremum of one minus the slope between (1,1) and any point on the ROC curve.\footnote{\cite{scott2015rate} estimated $\kappa = 1 - \alpha$. We have modified the ROC method notation to reflect the $\alpha$ notation used in this work.} This is equivalent to the Storey method of \cite{Storey2002} because
\begin{align*}
  \alpha_0 &= \sup_t 1 - \frac{1- G(t)}{1-G_L(t)}\\
  &= \sup_t \frac{G(t) - G_L(t)}{1-G_L(t)}\\
  &= \sup_t k(t).
\end{align*}
The true ROC curve is not known, so $\alpha_0$ cannot be computed directly from this expression. \cite{Blanchard2010} found a consistent estimator and \cite{scott2015rate} determined rates of convergence using VC theory. For application to data, \cite{scott2015rate} splits the labeled and unlabeled data sets in half, constructs a kernel logistic regression classifier on half the data, and estimates the slope between (1,1) and a discrete set of points on the ROC curve. The $\alpha_0$ estimate is the supremum of 1 minus each of these slopes. Thus we see that the ROC method and earlier methods developed in the FDR literature are in the same family of $\alpha$ estimation strategies. Choosing a $t$ in the Storey approach is equivalent to choosing a point on the ROC curve.

\subsubsection{Practical Implementation}

We consider two implementations of these ideas. The method of \cite{scott2015rate}, using a kernel logistic regression classifier and a PU training--test set split to estimate tuning parameters, is referred to as ``ROC.'' To facilitate comparison with C-PS, we consider another version with a Random Forest classifier using out--of--bag probabilities to construct the ROC curve. We call this method C-ROC.

\section{Numerical Experiments}
\label{sec:experiments}

To illustrate the proposed methods we carry out numerical experiments on simulated \textit{waveform} data and a real protein signaling data set \textit{TCDB-SwissProt}. We compare the performance of the three methods (C-PS, C-ROC and ROC) discussed in Section \ref{sec:estimation_alpha0} and the SPY method. With the SPY method, once the classifications (``positive'' or ``negative'') in set $U$ are made, we use the proportion of ``negative'' cases as an approximation of $\alpha_0$. For the C-ROC and C-PS methods \citep{Breiman2001}, we use Random Forest to construct $C_n(\cdot)$.

\subsection{Waveform Data}

\begin{figure}[t]
  \begin{center}
    \begin{includegraphics}[height=12cm,width=15cm]{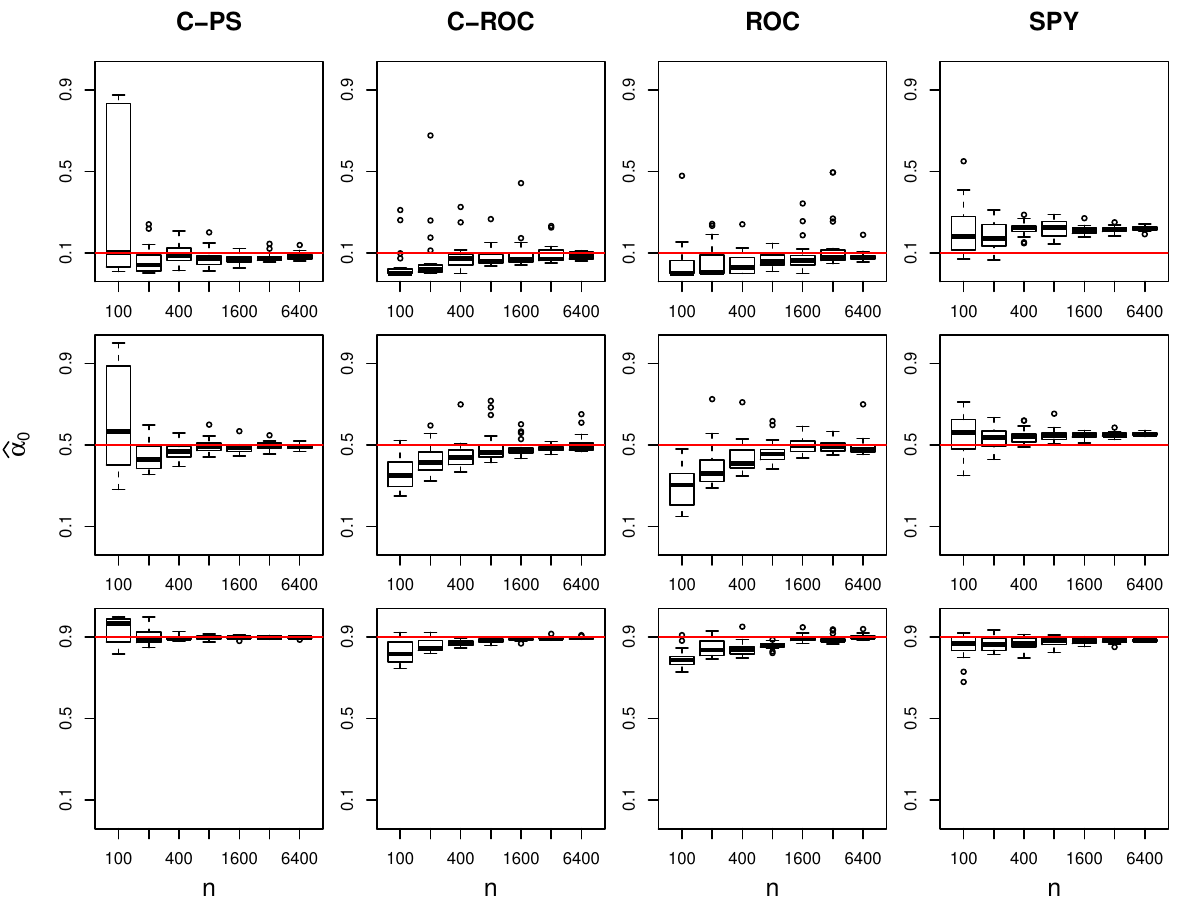}
      \caption{Comparison of methods with different sample sizes. The red solid horizontal lines represent the true $\alpha$ (0.1,0.5,0.9). The range for all y-axes is [0, 1] from bottom to top. The unlabeled sample size $n$ varies with $100\times 2^j ( j=0,\ldots, 6)$. Each boxplot summarizes 20 repeated estimates $\widehat{\alpha}_0$ for each $(n,\alpha)$ pair.\label{fig:comparison_varying_samplesize}}
    \end{includegraphics}
  \end{center}
\end{figure}

We simulate observations from the \textit{waveform} data set using the R-package \textit{mlbench} \citep{mlbench}. The \textit{waveform} data is a binary classification problem with 21 features. We fix $\pi = 0.5$ for all simulations.

\subsubsection{Varying $\alpha$}

We vary $\alpha$ from 0.01 to 0.99 in Model \eqref{eq:model1} in increments of $0.01$. For each $\alpha$ the sample sizes are fixed at $m=n=3000$. At each $\alpha$ we run the methods described to estimate $\alpha$ and classify observations in $U$. Results are shown in Figure \ref{fig:figA}. SPY produces inflated $\alpha$ estimates at low $\alpha$ (left panel) and has the the worst overall classification performance (center panel). Both C-ROC and ROC have substantial variability for $\alpha$ near 0.5. Overall, C-PS appears to be the best method.

\subsubsection{Varying Sample Size}

We empirically examine consistency and convergence rates of the methods by estimating $\alpha$ at increasing sample sizes, keeping the number of labeled and unlabeled observations equal, i.e. $n=m$. In Figure \ref{fig:comparison_varying_samplesize}, every method is repeated 20 times for each $(n, \alpha)$ pair. The 20 $\alpha_0$ estimates are displayed as a boxplot, which show estimator bias and variance. We see that all methods, except SPY, appear consistent under different settings ($\alpha = 0.1, 0.5, 0.9$). The estimators may have substantial bias at small $n$. C-PS struggles at $n=100$, but has the best overall performance, followed by C-ROC and ROC.

\subsubsection{Single Feature $\alpha_0$ Estimation}
\label{sec:single_feat}

\begin{figure}[ht]
  \begin{center}
    \begin{includegraphics}[height=5cm,width=8.5cm]{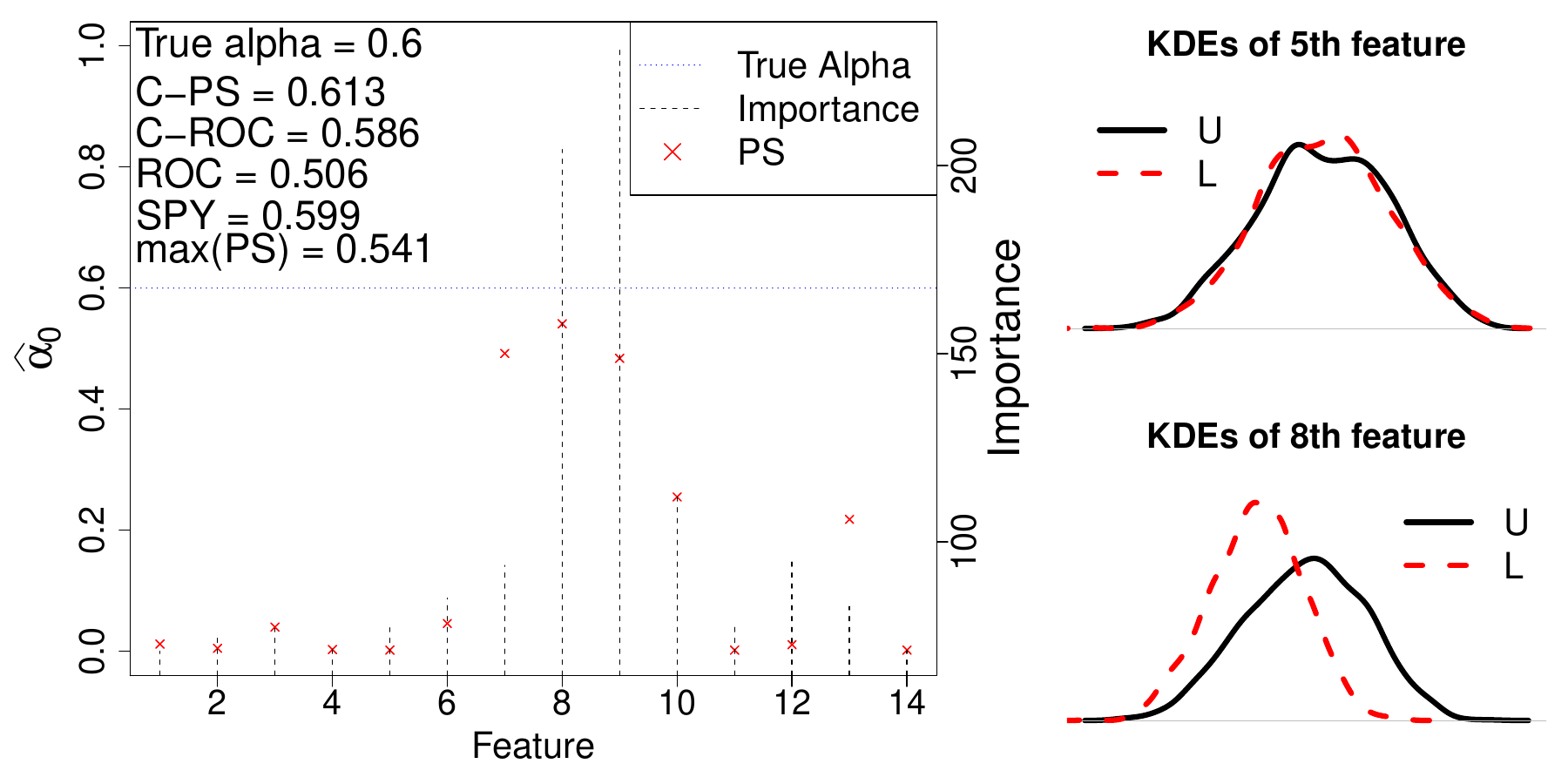}
      \caption{Estimation of $\alpha_0$ using individual features. In the left panel the horizontal blue dash line is the true $\alpha$ ($=0.6$), the vertical black dashed lines are the feature importances (right y--axis), and the red cross symbol are the $\alpha_0$ estimate using the Patra/Sen procedure on a single feature (left y--axis). The right panels are kernel density estimates of unlabeled (U) and labeled (L) data for features 5 and 8.
      \label{fig:comparison_varying_feature}}
    \end{includegraphics}
  \end{center}
\end{figure}

One approach to solving the multidimensional PU learning problem is to estimate $\alpha$ separately using each feature. If $X_i \in \mathbb{R}^p$, this results in $p$ estimates $\widehat{\alpha}_0^1, \ldots,\widehat{\alpha}_0^{p}$ of the parameter $\alpha$. Each of these is an estimated lower bound on $\alpha$. Thus a naive estimate of $\alpha_0$ is $\max(\widehat{\alpha}_0^1, \ldots,\widehat{\alpha}_0^{p})$. This approach ignores the correlation structure among features.


Using the \textit{waveform} data, we compare this strategy to the multi--dimensional classifier approach. To make the problem challenging we select the 14 weakest features, defined as having the lowest Random Forest importance scores. We apply the Patra--Sen one--dimensional method to obtain individual feature $\alpha_0$ estimates. The results are summarized in Figure \ref{fig:comparison_varying_feature}. Feature importance matches well with the performance of the $\alpha$ estimates. On the right panels of Figure \ref{fig:comparison_varying_feature}, we see that feature 5 is not useful because there is little difference between the unlabeled and labeled samples, leading to a feature based $\alpha$ estimate of approximately $0.012$. In contrast, feature 8 is better in that it gives an alpha estimate of approximately $0.542$. The SPY, C-ROC, and C-PS methods all perform better than the individual feature estimates (upper left of Figure \ref{fig:comparison_varying_feature}).

\subsection{Protein Signaling}

The transporter classification database (TCDB) \citep{tcdb2006}, here the $P$ set, consists of 2453 proteins involved in signaling across cellular membranes. It is desirable to add proteins to this database from unlabeled databases which contain a mixture of membrane transport and non--transport proteins. \cite{elkan2008learning} and \cite{das2007} manually identified 348 of the 4906 proteins as being related to transport in the SwissProt \citep{swissprot2003} database. We treat the SwissProt data as the unlabeled set $U$ for which we have ground truth $\alpha = (4906 - 348) / 4906 \approx 0.929$. Information from protein description documents are used as features including function, subcellular location, alternative products, and disease. In total there are $p=741$ features. We fit models with both the original feature set and with $2p$ and $10p$ features where all additional features are simulated by randomly selecting one of the original $p$ features and permuting its values among the observations. So for $10p$, $p$ features are original (and potentially useful for classifying observations) while $9p$ of the features are simulated noise. Since $10p > n+m$ (total training set size), this represents a high dimensional setting for estimating $\alpha$.

We compare C-PS with single feature PS for $p$, $2p$, and $10p$ features. A common strategy in high dimensional classification problems is to perform feature screening prior to classifier construction. Since the features are all binary, we screen features based on p--values from univariate chi-squared tests (Fisher exact when any $2 \times 2$ table cell counts are less than 10). We test the methods after screening for the top $k=500,200,100,50,10,1$ features with smallest p--values. The C-PS method is applied directly as described earlier on the $k$ best features. For the single feature PS method, after screening for $k$ features, the PS method is applied to all $k$ features individually and the largest estimate is taken as an estimate of $\alpha$. As explained in Section \ref{sec:single_feat}, for each feature the PS method is an estimate of a lower bound on $\alpha$, thus taking the maximum of these estimated lower bounds is sensible. 

Table \ref{tab:tcdb_swissprot} shows $\alpha$ estimates for each number of features and each $k$. First consider the two $p$ feature columns representing the C-PS and single feature PS methods. C-PS with $p$ features produces estimates which are high for $k=50$ to $500$ features, nearly correct for $k=10$ features, and biased quite low for $k=1$ feature. There appears to be some overfitting with large $k$, but extreme screening to $k=1$ results in a loss of information and a poor lower bound on $\alpha$. Single feature PS produces estimates that are too high for $k=500$ through $10$ and too low for $p=1$. It is either worse or no better than C-PS at each $k$. Single feature PS with $k=1$ represents choosing the best feature (based on p-values) and then applying the PS method. There is not sufficient information in this single best feature to obtain a good lower bound on $\alpha$. The behavior of PS overestimating $\alpha$ at $k=10$ through $500$ features is due to the sensitivity of taking the maximum of single feature PS estimates. Even a single large estimate on one feature results in an overall estimate which is too high. The natural way to correct this is to choose a smaller set of $k$ features, but with $k=1$ there is not sufficient information in this single feature to obtain a good lower bound. In contrast, C-PS effectively pools information across multiple features to produce improved estimates.

The results are quite consistent across $p$, $2p$ and $10p$. This is due to the fact that feature screening retains a very similar set of features regardless of the number of noisy features added to the data set. For example, with $k=500$, the $p$ and $10p$ feature models retain 316/500 of the same features while for $k=10$ they retain 10/10 of the same features. Thus the estimation methods (C-PS and PS) produce similar $\alpha$ estimates with $p$ and $10p$ features. This suggests that feature prescreening combined with C-PS can be an effective $\alpha$ estimation strategy in high--dimensional settings with many pure noise features.

\begin{table}[ht]
\centering
\begin{tabular}{c|cc|cc|cc}
  & \multicolumn{2}{c}{p} & \multicolumn{2}{c}{2p} & \multicolumn{2}{c}{10p} \\
 \hline
k & C-PS & PS & C-PS & PS & C-PS & PS \\ 
  \hline
500 & 0.97 & 1.00 & 0.96 & 1.00 & 0.97 & 1.00 \\ 
  200 & 0.97 & 1.00 & 0.97 & 1.00 & 0.97 & 1.00 \\ 
  100 & 0.96 & 0.99 & 0.96 & 0.99 & 0.96 & 0.99 \\ 
  50 & 0.95 & 0.99 & 0.95 & 0.99 & 0.95 & 0.99 \\ 
  10 & 0.93 & 0.98 & 0.93 & 0.98 & 0.93 & 0.98 \\ 
  1 & 0.77 & 0.77 & 0.77 & 0.77 & 0.77 & 0.77 \\ 
   \hline
\end{tabular}
\caption{$\alpha$ estimates from C-PS and single feature PS with p, 2p, and 10p features when prescreening the top k features. True $\alpha \approx 0.93$.} 
\label{tab:tcdb_swissprot}
\end{table}

\section{Conclusion} \label{sec:conclusion}

In this paper we proposed a framework for estimating the mixture proportion and classifier in the PU learning problem. We implemented this framework using two estimators from the FDR literature, C-PS and C-ROC. The framework has the power to incorporate other one-dimensional MPE procedures, such as \cite{meinshausen2006estimating}, \cite{genovese2004stochastic}, \cite{langaas2005}, \cite{efron2007}, \cite{jin2008}, \cite{cai2010} or \cite{nguyen2014}. More generally we have strengthened connections between the classification--machine learning literature and the multiple testing literature by constructing estimators using ideas from both communities. Potential directions for future research include generalizing results to the case where the labeled data contains some mislabeling of observations, relaxing Assumption A, and developing methods to handle cases where labeled and unlabeled data sets sizes are substantially different (class imbalance).

\section*{Supplementary Materials}
\label{sec:supp}

  \texttt{R}--code and data needed for reproducing results in this work are available online at \url{github.com/zflin/PU_learning}. 

\baselineskip=14pt
\section*{Acknowledgments}

Part of this work was completed while the authors were research fellows in the Astrostatistics program at the Statistics and Applied Mathematical Sciences Institute (SAMSI) in Fall 2016. The authors gratefully acknowledge SAMSI's support.

\subsection*{Conflict of interest}

The authors declare no potential conflict of interests.

\section*{Supporting information}

The following supporting information is available as part of the online article:

\noindent
\textbf{Technical Notes}
{Proofs of all Theorems in this work.}


\bibliographystyle{abbrvnat}
\bibliography{refs}

\clearpage\pagebreak\newpage
\begin{center}
{\LARGE{\bf Technical Notes for\\ {\it A Flexible Procedure for Mixture Proportion Estimation in Positive--Unlabeled Learning}}}
\end{center}

\setcounter{equation}{0}
\setcounter{page}{1}
\setcounter{table}{1}
\setcounter{section}{0}
\renewcommand{\theequation}{A.\arabic{equation}}
\renewcommand{\thesection}{A.\arabic{section}}
\renewcommand{\thesubsection}{A.\arabic{section}.\arabic{subsection}}
\renewcommand{\thepage}{A.\arabic{page}}
\renewcommand{\thetable}{A.\arabic{table}}
\baselineskip=17pt

\section{Proof of Theorems}
\label{sec:technical_notes}

\subsection{Proof of Theorem \ref{thm:change_problems}}
\label{prf:change_problems}

Equivalently, we are trying to prove
\begin{align}
\frac{G-(1-\gamma)G_L}{\gamma} \text{ is a c.d.f.} \Leftrightarrow \frac{F-(1-\gamma)F_1}{\gamma} \text{ is a c.d.f.}
\end{align}
Sufficient to show
\begin{align}
G-(1-\gamma)G_L \text{ non-decreasing} \Leftrightarrow& f - (1-\gamma)f_1 \geq 0\\
&\text{ with probability 1} \nonumber.
\end{align}
First we show $\Leftarrow$. Consider any $t_2 > t_1$. Then
\begin{align*}
  &\left(G(t_2) - (1-\gamma)G_L(t_2)\right) - \left(G(t_1) - (1-\gamma)G_L(t_1)\right) \\
  &= \int_{\{x: C(x) \in (t_1,t_2]\}} \underbrace{f(x) - (1-\gamma)f_1(x)}_{\geq 0 \text{ by assumption}} d\mu(x)\\
    &\geq 0.
\end{align*}
Now we show $\Rightarrow$ by proving the contrapositive. By assumption there exists
\begin{equation*}
A = \{x: f(x) - (1-\gamma)f_1(x) < 0\}
\end{equation*}
such that $P(A) > 0$. Further we have
\begin{align*}
  A &= \left\{x : (1-\gamma)\frac{(1-\pi)}{\pi} > \frac{f(x)}{f_1(x)}\frac{(1-\pi)}{\pi}\right\}\\
  &= \left\{x : \underbrace{\frac{1}{1 + (1-\gamma)\frac{(1-\pi)}{\pi}}}_{\equiv t^*} < C(x)\right\}.
\end{align*}
So
\begin{align*}
&\left(G(1) - (1-\gamma)G_L(1)\right) - \left(G(t^*) - (1-\gamma)G_L(t^*)\right) \\
&=  \int_{A=\{x: C(x) > t^*\}} f(x) - (1-\gamma)f_1(x) d\mu(x)\\
    &< 0.
\end{align*}

\subsection{Proof of Theorem \ref{thm:CD_donsker}}
\label{prf:CD_donsker}
\begin{align*}
  n^{\beta}(G_{n}(t) - G(t)) &= \frac{n^\beta}{n^{1/2}} \underbrace{n^{1/2}\left(G_{n}(t) - \E[\ind{C_n(X) \leq t} \vert C_n]\right)}_{\equiv \mathbb{R}_n(t)}\\
  &+ n^\beta\underbrace{\left( \E[\ind{C_n(X) \leq t} \vert C_n] - G(t)\right)}_{\equiv \mathbb{Q}_n(t)}
\end{align*}
We now show that $\mathbb{R}_n(t)$ and $\mathbb{Q}_n(t)$ are $O_P(1)$ uniformly in $t$. Together these facts show the expression is $O_P(1)$ uniformly in $t$.

\noindent
\underline{\textbf{$\mathbb{R}_n(t)$:}} Note
\begin{equation*}
  \mathbb{R}_n(t) = \sqrt{n}\left(\frac{1}{n} \sum_{i=1}^n \ind{C_n(X_i) \leq t} - \E[\ind{C_n(X) \leq t} \vert C_n]\right).
\end{equation*}
By the DKW inequality
\begin{equation*}
  P(||\mathbb{R}_n||_{\infty} > x \big\vert C_n) \leq 2e^{-2x^2}.
\end{equation*}
Thus $||\mathbb{R}_n||_{\infty}$ is $O_P(1)$.

\noindent
\underline{\textbf{$\mathbb{Q}_n(t)$:}} We have
\begin{align*}
  \mathbb{Q}_n(t) &= \E[\overbrace{(\ind{C_n(X) \leq t} - \ind{C(X) \leq t})}^{\equiv T_n}\big\vert C_n]\\
  &\leq \underbrace{|E[T_n \ind{|C(X) - t| \leq \epsilon_n} \big\vert C_n ]|  }_{B_1}\\
  &+ \underbrace{|E[T_n \ind{|C(X) - t| > \epsilon_n}\ind{|C(X)-C_n(X)|<\epsilon_n} \big\vert C_n ]|}_{B_2} \\
  & + \underbrace{|E[T_n \ind{|C(X) - t| > \epsilon_n}\ind{|C(X)-C_n(X)|>\epsilon_n} \big\vert C_n ]|}_{B_3}
\end{align*}
Noting that $|T_n| \leq 1$ and $C_n$ is independent of $C(X)$, we have
\begin{equation*}
  B_1 \leq P(|C(X) - t| \leq \epsilon_n) \leq 2\epsilon_n \sup_{t} g(t)
\end{equation*}
where $g$ is the density of $C(X)$, which exists and is bounded by Assumptions \ref{ass:classifier_continuous}. $B_2$ is $0$ because $T_n=0$ whenever the indicator functions in $B_2$ are both $1$. Finally noting $B_3 \leq \E[\ind{|C(X) - C_n(X)| > \epsilon_n}|C_n]$ and using Markov's inequality twice, we have
\begin{align*}
  P(B_3 > r_n) &\leq P(\E[\ind{|C(X) - C_n(X)| > \epsilon_n}|C_n] > r_n)\\
  &\leq \frac{P(|C(X) - C_n(X)| > \epsilon_n)}{r_n}\\
  &\leq \frac{E[|C_n(X) - C(X)|]}{\epsilon_nr_n}.
\end{align*}
Setting $\epsilon_n = n^{-\tau/3}$, $r_n = n^{-\tau/3}$, and $\beta = \tau/3$ achieves the desired result. Identical arguments hold for showing $n^{\beta}(G_{L,n}(t) - G_L(t))$ is $O_P(1)$ uniform in $t$.

\subsection{Proof of Theorem \ref{thm:storey}}
\label{prf:storey}

Since $  \widehat{t} = \inf \{ t : G_{L,n}(t) \geq 1-n^{-q}\} - n^{-1}$ and $0 < q < \beta$, we have
\begin{equation*}
  (n^\beta(1-G_{L,n}(\widehat{t})))^{-1} = \frac{n^q}{n^\beta} = o(1).
\end{equation*}
Recall  by Theorem \ref{thm:CD_donsker} we have
\begin{align*}
&n^{\beta}(G_{L,n}(t) - G_{L}(t)) \equiv d_L(t) = O_P(1)\\
&n^{\beta}(G_{n}(t) - G(t)) \equiv d(t) =  O_P(1)
\end{align*}
where this and subsequent $O_P$ and $o_P$ are uniform in $t$. We have
\begin{align*}
\frac{G_n(\widehat{t}) - G_{L,n}(\widehat{t})}{1-G_{L,n}(\widehat{t})} &= 
\frac{G(\widehat{t}) - G_L(\widehat{t})}{1-G_{L,n}(\widehat{t})} + \frac{n^{-\beta}(d_L(\widehat{t}) - d(\widehat{t}))}{1-G_{L,n}(\widehat{t})}\\
&= \underbrace{\left(\frac{1-G_{L}(\widehat{t})}{1-G_{L,n}(\widehat{t})}\right)}_{\equiv A}  \underbrace{\left(\frac{G(\widehat{t}) - G_L(\widehat{t})}{1-G_L(\widehat{t})}\right)}_{\equiv k(\widehat{t})}\\
&+ \underbrace{\frac{d_L(\widehat{t}) - d(\widehat{t})}{n^\beta(1-G_{L,n}(\widehat{t}))}}_{o_P(1)}.
\end{align*}
Note that
\begin{equation*}
  A = 1 + \frac{d_L(\widehat{t})}{n^\beta(1 - G_{L,n}(\widehat{t}))} = 1 + o_P(1).
\end{equation*}
Thus it is sufficient to show that $k(\widehat{t}) \rightarrow \alpha_0$. By Lemma \ref{lem:ktlim}, $k(t) \uparrow \alpha_0$ as $t \uparrow t^*$. We show that for any $\epsilon > 0$
\begin{equation*}
  P(\widehat{t} \in (t^*-\epsilon,t^*))\rightarrow 1.
\end{equation*}
Thus by the continuous mapping theorem, the estimator is consistent.

    \noindent
\underline{\textbf{Part 1:}} We show $P(t^* - \widehat{t} > \epsilon) \rightarrow 0$. By the definition of $t^*$, there exists $\gamma > 0$ such that $G_{L}(t^* - \epsilon/2) = 1 - \gamma$. We have
    \begin{align*}
    &P(t^* - \widehat{t} > \epsilon)\\
    &= P(G_{L,n}(t^* - \epsilon + n^{-1}) > G_{L,n}(\widehat{t} + n^{-1}))\\
    &\leq P(G_{L,n}(t^* - \epsilon + n^{-1}) > 1-n^{-q})\\
    &\leq \underbrace{P(G_L(t^* - \epsilon + n^{-1}) > 1-n^{-q} - \gamma / 2)}_{\equiv A} \\
    &+ \underbrace{P(|G_{L,n}(t^* - \epsilon + n^{-1}) - G_L(t^* - \epsilon + n^{-1})| > \gamma / 2)}_{\rightarrow 0 \text{ by Theorem \ref{thm:CD_donsker}}}.
    \end{align*}
$A \rightarrow 0$ because for sufficiently large $n$, $G_L(t^* - \epsilon + n^{-1}) \leq G_L(t^* - \epsilon/2) = 1 - \gamma < 1 - n^{-q} - \gamma/2$.

    \noindent
    \underline{\textbf{Part 2:}} We show $P(\widehat{t} \geq t^*) \rightarrow 0$. We have
    \begin{align*}
      P(\widehat{t} \geq t^*) &= P(G_{n,L}(\widehat{t} + n^{-1}) \geq G_{n,L}(t^* + n^{-1}))\\
      &= P(1 - n^{-q} \geq G_{n,L}(t^* + n^{-1}))\\
      &= P(1 - G_{n,L}(t^* + n^{-1}) \geq n^{-q})\\
      &= P(\underbrace{n^\beta(G_{L}(t^* + n^{-1}) - G_{n,L}(t^* + n^{-1}))}_{O_P(1) \text{ by Theorem \ref{thm:CD_donsker}}} \geq n^{\beta-q}).
    \end{align*}
Since $\beta > q$ we have the result.

\subsection{Proof of Theorem \ref{thm:PS_consistency}}
\label{prf:PS_consistency}
\begin{proof}
$\forall \epsilon > 0$, we need to show $P(|\widehat{\alpha}_0^{c_n} - \alpha_0| > \epsilon) \rightarrow 0$. Note
\begin{equation*}
P(|\widehat{\alpha}_0^{c_n} - \alpha_0 | > \epsilon) =  P(\widehat{\alpha}_0^{c_n} < \alpha_0 - \epsilon) + P(\widehat{\alpha}_0^{c_n} > \alpha_0 + \epsilon).
\end{equation*}
First we show that $P(\widehat{\alpha}_0^{c_n} < \alpha_0 - \epsilon) \rightarrow 0$. If $\alpha_0 \leq \epsilon$, then
\begin{align*}
P(\widehat{\alpha}_0^{c_n} < \alpha_0 - \epsilon) \leq P(\widehat{\alpha}_0^{c_n} < 0) = 0.
\end{align*}
If $\alpha_0 > \epsilon$, suppose we have $\widehat{\alpha}_0^{c_n} < \alpha_0 - \epsilon$, then by Lemma \ref{lem:PS_convex},
\begin{align*}
d_n(\widehat{G}_{s,n}^{\alpha_0-\epsilon}, \check{G}_{s,n}^{\alpha_0-\epsilon}) \leq \frac{c_n}{n^{\beta-\eta} (\alpha_0-\epsilon)}.
\end{align*}
The LHS of above converges to positive constant by Lemma \ref{lem:PS_distance}, while the RHS converges to zero by the choice of $c_n$, hence $P(\widehat{\alpha}_0^{c_n} < \alpha_0 - \epsilon) \rightarrow 0$.

Now we show that $P(\widehat{\alpha}_0^{c_n} > \alpha_0 + \epsilon) \rightarrow 0$. Suppose we have $\widehat{\alpha}_0^{c_n} > \alpha_0 + \epsilon$, then by Lemma \ref{lem:PS_convex},
\begin{align*}
n^{\beta-\eta}d_n(\widehat{G}_{s,n}^{\alpha_0+\epsilon}, \check{G}_{s,n}^{\alpha_0+\epsilon}) > \frac{c_n}{(\alpha_0-\epsilon)}.
\end{align*}
The LHS of above converges to zero by Lemmas \ref{lem:PS_distance} and \ref{lem:CVM_rate}, while the RHS diverges to infinity by the choice of $c_n$, hence $P(\widehat{\alpha}_0^{c_n} > \alpha_0 + \epsilon) \rightarrow 0$.
\end{proof}

\section{Lemmas}
\label{sec:lemmas}

    \begin{Lem}
      \label{lem:ktlim}
  $\lim_{t\uparrow t^*} k(t) = \alpha_0$.
    \end{Lem}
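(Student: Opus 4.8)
The plan is to reduce everything to the likelihood ratio between $G$ and $G_L$, which in this problem is an explicit decreasing function of $t$, and then read off the limit directly.

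First I would identify the density ratio $g/g_L$. The crucial structural fact is that the event $\{C(X)=t\}$ pins down the likelihood ratio: from the identity $\frac{f_1(x)}{f(x)} = \frac{C(x)}{1-C(x)}\frac{1-\pi}{\pi}$ derived in Section~\ref{sec:procedure}, on the level set $\{x : C(x)=t\}$ one has $f(x)/f_1(x)=\phi(t)$, where $\phi(t):=\frac{\pi}{1-\pi}\cdot\frac{1-t}{t}$. Pushing forward through $T=C(X)$, for any Borel $B$ I would write $G(B)=\int_{\{C(x)\in B\}} f\,d\mu = \int_{\{C(x)\in B\}}\phi(C(x))f_1\,d\mu=\int_B \phi\,dG_L$, which yields $g(t)=\phi(t)g_L(t)$ a.e. Two consequences are immediate: $\phi$ is continuous and strictly decreasing on $(0,1)$, and since $\phi$ is finite on the support of $g_L$, we get $g=0$ wherever $g_L=0$, so $G(t^*)=1$ and $1-G(t)=\int_t^{t^*}\phi\,g_L$ for $t<t^*$.

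Second I would compute the limit of $k$. Writing $k(t)=1-\frac{1-G(t)}{1-G_L(t)}=1-\frac{\int_t^{t^*}\phi(s)g_L(s)\,ds}{\int_t^{t^*} g_L(s)\,ds}$, the ratio is a $g_L$-weighted average of $\phi$ over $[t,t^*]$. Since $\phi$ is decreasing, this average is squeezed between $\phi(t^*)$ and $\phi(t)$; letting $t\uparrow t^*$ and using continuity of $\phi$ forces the average to $\phi(t^*)$, hence $\lim_{t\uparrow t^*} k(t)=1-\phi(t^*)$. Finally I would identify $\alpha_0$ with $1-\phi(t^*)$. As in the proof of Theorem~\ref{thm:change_problems}, $\frac{G-(1-\gamma)G_L}{\gamma}$ is a c.d.f.\ iff $G-(1-\gamma)G_L$ is nondecreasing (the endpoint values $0$ and $1$ being automatic), which by absolute continuity is equivalent to $g\ge(1-\gamma)g_L$ a.e., i.e.\ $1-\gamma\le\phi(s)$ for a.e.\ $s$ in the support of $g_L$. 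Because $\phi$ is continuous and decreasing and $t^*$ is the right endpoint of that support, the essential infimum of $\phi$ over $\mathrm{supp}(g_L)$ equals $\phi(t^*)$, so the binding constraint is $\gamma\ge 1-\phi(t^*)$ and thus $\alpha_0=1-\phi(t^*)$, matching the limit.

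The main obstacle I anticipate is the first step: justifying the pushforward identity $g=\phi\,g_L$ cleanly, namely the constancy of the likelihood ratio on level sets of $C$ and the change of variables for the induced measures, together with the support bookkeeping that guarantees $G(t^*)=1$ and that the essential infimum of $\phi$ over $\mathrm{supp}(g_L)$ is genuinely $\phi(t^*)$. Once $g=\phi\,g_L$ with $\phi$ decreasing is in hand, the remaining steps are short squeeze and essential-infimum arguments.
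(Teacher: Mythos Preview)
Your argument is correct and takes a genuinely different, more explicit route than the paper. The paper proves the lemma by a two-sided sandwich: the bound $k(t)\le\alpha_0$ is immediate from the decomposition $G=\alpha_0 G_{\alpha_0}+(1-\alpha_0)G_L$, while the reverse direction takes any $\gamma<\alpha_0$, locates $t_1<t_2<t^*$ where $\frac{G-(1-\gamma)G_L}{\gamma}$ fails monotonicity, and then invokes the Ratio Lemma (Lemma~\ref{lem:ratio}) together with the monotonicity of $k$ (Lemma~\ref{lem:ktinc}) to deduce $\gamma\le k(t_2)\le\lim_{t\uparrow t^*}k(t)$. Neither side computes the limit or $\alpha_0$ explicitly.

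You instead compute both quantities exactly as $1-\phi(t^*)$, which is more informative (a closed form for $\alpha_0$ in terms of the right endpoint of $\mathrm{supp}\,g_L$) and shorter once $g=\phi g_L$ is established. The core structural fact is shared: the paper's Lemma~\ref{lem:ratio} is precisely the integrated version of your density identity, proved by the same level-set observation that $f/f_1$ is constant on $\{C=t\}$. The difference is organizational---you pass to densities and read off the value, whereas the paper stays at the level of increments and argues by contradiction. One minor cost of your route is that you invoke Assumption~\ref{ass:classifier_continuous} to get Lebesgue densities $g,g_L$, while the paper's proof needs only the densities $f,f_1$ guaranteed by the model; this is harmless here, and your pushforward identity could in any case be phrased as $dG=\phi\,dG_L$ (a Radon--Nikodym statement) without appealing to Assumption~\ref{ass:classifier_continuous}.
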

    \begin{proof}
      Define $\alpha_0' = \lim_{t\uparrow t^*} k(t)$.

      \noindent
      \textbf{Show $\alpha_0' \leq \alpha_0$:} By the definition of $\alpha_0$ there exists c.d.f. $G_{\alpha_0}$ such that
  \begin{align*}
    G(t) &= \alpha_0G_{\alpha_0}(t) + (1-\alpha_0)G_L(t)\\
    &\leq \alpha_0 + (1-\alpha_0)G_L(t).
  \end{align*}
  Thus
  \begin{equation*}
    k(t) = \frac{G(t) - G_L(t)}{1-G_L(t)} \leq \alpha_0
  \end{equation*}
  for all $t$. Thus $\alpha_0' = \lim_{t\uparrow t^*} k(t) \leq \alpha_0$.

  \noindent
  \textbf{Show $\alpha'_0 \geq \alpha_0$:} Consider any $\gamma < \alpha_0$. We show $\gamma < \alpha_0'$. Since $\gamma < \alpha_0$,
  \begin{equation*}
    \frac{G - (1-\gamma)G_L}{\gamma}
  \end{equation*}
  is not a c.d.f. Thus there exists $t_1 < t_2$ such that
  \begin{equation} \label{eq:step1}
    \frac{G(t_1) - (1-\gamma)G_L(t_1)}{\gamma} >     \frac{G(t_2) - (1-\gamma)G_L(t_2)}{\gamma}.
  \end{equation}
  
  \underline{Case 1: L.H.S. of Equation \eqref{eq:step1} $> 1$:}
  If $G_L(t_1) = G(t_1)$, L.H.S. $=G_L(t_1) \leq 1$. Thus $G_L(t_1) \neq G(t_1)$. By Lemma \ref{lem:monotone} $G_L(t_1) \leq G(t_1) \leq 1$. Thus $G_L(t_1) < 1$. By assumption
  \begin{equation*}
    \frac{G(t_1) - (1-\gamma)G_L(t_1)}{\gamma} >  1
  \end{equation*}
  Rearranging terms
  \begin{equation}
  \label{eq:almost}
    \gamma < \frac{G(t_1) - G_L(t_1)}{1-G_L(t_1)}
  \end{equation}
  Since $G_L(t^*)=1$ and $G_L(t) < 1$, $t_1 < t^*$. Thus by Lemma \ref{lem:ktinc} the R.H.S. of Equation \eqref{eq:almost} is bounded by $\alpha_0'$.
  
  \underline{Case 2: L.H.S. of Equation \eqref{eq:step1} $\leq 1$:}
  If $t_2 \geq t^*$, then $G_L(t_2) = 1$. Since $G(t) \geq G_L(t)$ (Lemma \ref{lem:monotone}), $G(t_2)=1$. Thus the R.H.S. of Equation \eqref{eq:step1} equals 1. This violates the assumption of Case 2, thus $t_2 < t^*$.
  
  From Equation \eqref{eq:step1} we have
  \begin{equation*}
    G(t_1) - G(t_2) > (1-\gamma) (G_L(t_1) - G_L(t_2))
  \end{equation*}
which implies (since $G_L(t_1) - G_L(t_2) < 0$) that
\begin{equation} \label{eq:continue}
    \frac{G(t_2) - G(t_1)}{G_L(t_2) - G_L(t_1)} < (1-\gamma).
  \end{equation}
  From Lemma \ref{lem:ratio} we have
  \begin{equation*}
    \frac{1 - G_L(t_2)}{1-G(t_2)} = \frac{G_L(1) - G_L(t_2)}{G(1)-G(t_2)} \geq \frac{G_L(t_2) - G_L(t_1)}{G(t_2)-G(t_1)}
  \end{equation*}
  Combining this result with Equation \eqref{eq:continue} we obtain
  \begin{equation*}
    \frac{1 - G(t_2)}{1-G_L(t_2)} \leq 1-\gamma
  \end{equation*}
  which implies
  \begin{equation*}
    \gamma \leq \frac{G(t_2) - G_L(t_2)}{1-G_L(t_2)} = k(t_2)
  \end{equation*}
  Since $k(t) \uparrow$ as $t \uparrow t^*$ (see Lemma \ref{lem:ktinc}), we have the result.
\end{proof}

\begin{Lem} \label{lem:ktinc}
  $k(t)$ is increasing on $t\in [0,t^*)$.
\end{Lem}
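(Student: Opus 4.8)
The plan is to recast $k$ in terms of survival functions and reduce the claim to a monotone likelihood ratio (MLR) property of the classifier output. Write
\[
k(t) = 1 - \frac{1-G(t)}{1-G_L(t)},
\]
which is legitimate on $[0,t^*)$ because $1-G_L(t) > 0$ there. Hence it suffices to show that the survival ratio $S(t) := \big(1-G(t)\big)/\big(1-G_L(t)\big)$ is non-increasing on $[0,t^*)$. Everything then hinges on a density-ratio identity for the pushforward laws $G$ and $G_L$.

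The key structural step is to invert the definition of the classifier in \eqref{eq:emp_classifier}. Solving for the likelihood ratio gives the pointwise relation $f_1(x) = \lambda\big(C(x)\big)\, f(x)$ ($\mu$-a.e.), where $\lambda(t) := \frac{1-\pi}{\pi}\cdot\frac{t}{1-t}$ is \emph{strictly increasing} on $[0,1)$; the identity holds even where $f$ vanishes, since $f=0$ forces $f_1=0$. Pushing forward along $C$, for every bounded measurable $\phi$,
\[
\int \phi \, dG_L = \int \phi\big(C(x)\big)\, f_1(x)\, d\mu(x) = \int \phi\big(C(x)\big)\, \lambda\big(C(x)\big)\, f(x)\, d\mu(x) = \int \phi(t)\, \lambda(t)\, dG(t),
\]
so that $g_L(t) = \lambda(t)\, g(t)$ for a.e. $t$. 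This is the MLR property of the classifier scores, and it uses only that $G$ and $G_L$ have densities (Assumption \ref{ass:classifier_continuous}).

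With the MLR identity in hand the conclusion is a textbook computation. Since $G,G_L$ are absolutely continuous, $S$ is differentiable a.e. and
\[
S'(t) = \frac{g_L(t)\big(1-G(t)\big) - g(t)\big(1-G_L(t)\big)}{\big(1-G_L(t)\big)^2}.
\]
Substituting $g_L = \lambda g$, the numerator equals $g(t)\big[\lambda(t)\big(1-G(t)\big) - \big(1-G_L(t)\big)\big]$. Because $\lambda$ is increasing, $1-G_L(t) = \int_t^{1}\lambda(s)\,g(s)\,ds \ge \lambda(t)\int_t^{1} g(s)\,ds = \lambda(t)\big(1-G(t)\big)$, so the bracket is $\le 0$ and therefore $S'(t)\le 0$. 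Thus $S$ is non-increasing and $k$ is increasing on $[0,t^*)$. I expect the MLR identity $g_L = \lambda g$ to be the only genuinely delicate point (in particular the boundary behavior where $g$ vanishes, which the identity itself resolves, since the numerator above is then zero); the remaining monotonicity is routine.

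As a leaner, density-free alternative I would instead apply the mediant inequality directly to $S(t_1)$ for $t_1<t_2<t^*$: writing $1-G(t_1)=\big(1-G(t_2)\big)+\big(G(t_2)-G(t_1)\big)$ and likewise for $G_L$, the value $S(t_1)$ is a mediant of $\frac{1-G(t_2)}{1-G_L(t_2)}=S(t_2)$ and $\frac{G(t_2)-G(t_1)}{G_L(t_2)-G_L(t_1)}$, so $S(t_1)\ge S(t_2)$ reduces to $\frac{G(t_2)-G(t_1)}{G_L(t_2)-G_L(t_1)} \ge \frac{1-G(t_2)}{1-G_L(t_2)}$. This is exactly the reciprocal of the increment-ratio inequality furnished by Lemma \ref{lem:ratio} (already invoked in the proof of Lemma \ref{lem:ktlim}), so this route sidesteps Assumption \ref{ass:classifier_continuous} entirely and recycles machinery the paper has in place.
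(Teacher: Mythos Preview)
Your proposal is correct, and your ``density-free alternative'' in the final paragraph is precisely the paper's own proof: the paper writes $1-k(t_1)$ as the mediant of $S(t_2)=\frac{1-G(t_2)}{1-G_L(t_2)}$ and the increment ratio $\frac{G(t_2)-G(t_1)}{G_L(t_2)-G_L(t_1)}$, invokes Lemma~\ref{lem:ratio} to compare these two fractions, and applies the mediant inequality $\frac{a+c}{b+d}>\frac{a}{b}$ (when $a/b<c/d$) to conclude $S(t_1)\ge S(t_2)$.

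Your primary approach---deriving the monotone likelihood ratio identity $g_L(t)=\lambda(t)\,g(t)$ directly from the classifier formula and then differentiating $S$---is a genuinely different route. It makes the MLR structure of the scores explicit (the paper uses it only implicitly, through the bounds in Lemma~\ref{lem:ratio}) and reduces the monotonicity to the one-line comparison $\int_t^1 \lambda(s)g(s)\,ds \ge \lambda(t)\int_t^1 g(s)\,ds$. The cost is that it requires Assumption~\ref{ass:classifier_continuous}, and you should add a line noting that $S$ is absolutely continuous on compact subintervals of $[0,t^*)$ (since $1-G_L$ is bounded away from zero there), so that $S'\le 0$ a.e.\ really does force $S$ to be non-increasing. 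The paper's mediant route is shorter and works without densities; your density argument has the merit of revealing \emph{why} the result holds---the MLR of the classifier output is the underlying mechanism.
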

\begin{proof}
Recall $Q(p) = \inf\{t \in (0,1] : G_L(t) \geq p\}$ and $t^* = Q(1)$. Note that with $a,b,c,d > 0$ and $a/b < c/d$,
\begin{equation*}
  \frac{a + c}{b + d} > \frac{a}{b}.
\end{equation*}
Next note that by Lemma \ref{lem:ratio}, for $t^* > t_2 > t_1$,
\begin{equation*}
\frac{G(t_2) - G(t_1)}{G_L(t_2) - G_L(t_1)} >   \frac{1 - G(t_2)}{1-G_L(t_2)}.
\end{equation*}
Thus we have
\begin{align*}
  1-k(t_1) &= \frac{1-G(t_1)}{1-G_L(t_1)}\\
  &= \frac{1-G(t_2) + G(t_2) - G(t_1)}{1-G_L(t_2) + G_L(t_2) -G_L(t_1)}\\
  &\geq \frac{1-G(t_2)}{1-G_L(t_2)}\\
  &=1-k(t_2).
\end{align*}
\end{proof}

\begin{Lem} \label{lem:monotone}
\begin{equation*}
\frac{g_L(t)}{g(t)} = \frac{1-\pi}{\pi}\frac{t}{1-t}
\end{equation*}
and 
\begin{equation*}
G(t) \geq G_L(t)
\end{equation*}
for all $t$.
\end{Lem}
\begin{proof}
Define $A = \{x : \frac{\pi f_L(x)}{\pi f_L(x) + (1-\pi)f(x)}=t\}$
\begin{align*}
\frac{g_L(t)}{g(t)} &= \frac{\int_A f_L(x)}{\int_A f(x)}\\
&= \frac{1-\pi}{\pi} \frac{\int_A \frac{\pi f_L(x)}{\pi f_L(x) + (1-\pi)f(x)}\pi f_L(x) + (1-\pi)f(x) }{\int_A \frac{(1-\pi) f(x)}{\pi f_L(x) + (1-\pi)f(x)} \pi f_L(x) + (1-\pi)f(x)}\\
&= \frac{1-\pi}{\pi} \frac{\int_A t(\pi f_L(x) + (1-\pi)f(x))}{\int_A (1-t)(\pi f_L(x) + (1-\pi)f(x))}\\
&= \frac{1-\pi}{\pi} \frac{t}{1-t}
\end{align*}
Thus $\frac{g_L(t)}{g(t)} \, \, \uparrow$ as $t \, \, \uparrow$. 
Since $\frac{g_L(t)}{g(t)}$ is monotone increasing in $t$, $g_L$ stochastically dominates $g$ and thus $G(t) \geq G_L(t)$ for all $t$. Formally this can be shown by considering any $t_2 > t_1$ and noting
\begin{equation*}
\frac{g_L(t_2)}{g(t_2)} \geq \frac{g_L(t_1)}{g(t_1)}.
\end{equation*}
Thus
\begin{equation}
\label{eq:mlr_imp}
g_L(t_2)g(t_1) \geq g_L(t_1)g(t_2).
\end{equation}
Integrating \eqref{eq:mlr_imp} from 0 to $t_2$ with respect to $t_1$ we obtain
\begin{equation*}
g_L(t_2)G(t_2) \geq G_L(t_2)g(t_2),
\end{equation*}
which implies
\begin{equation}
\label{eq:p1}
\frac{G(t)}{G_L(t)} \geq \frac{g(t)}{g_L(t)}.
\end{equation}
Integrating \eqref{eq:mlr_imp} from $t_1$ to $1$ with respect to $t_2$ we obtain
\begin{equation*}
(1-G_L(t_1))g(t_1) \geq g_L(t_1)(1-G(t_1)),
\end{equation*}
which implies
\begin{equation}
\label{eq:p2}
\frac{g(t)}{g_L(t)} \geq \frac{1-G(t)}{1-G_L(t)}.
\end{equation}
Combining Equations \eqref{eq:p1} and \eqref{eq:p2} we have
\begin{equation*}
\frac{G(t)}{G_L(t)} \geq \frac{1-G(t)}{1-G_L(t)}
\end{equation*}
which implies the result
\begin{equation*}
G(t) \geq G_L(t).
\end{equation*}
\end{proof}

\begin{Lem} [Ratio] \label{lem:ratio}
  For all $0 \leq t_1 < t_2 \leq 1$ where $G(t_2) - G(t_1) > 0$ we have
\begin{equation*}
  \frac{1-\pi}{\pi} \frac{t_1}{1-t_1} < \frac{G_L(t_2) - G_L(t_1)}{G(t_2) - G(t_1)} \leq \frac{1-\pi}{\pi} \frac{t_2}{1-t_2}
\end{equation*}
where $1/0 \equiv \infty$.
\end{Lem}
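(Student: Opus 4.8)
The plan is to rewrite both increments $G_L(t_2)-G_L(t_1)$ and $G(t_2)-G(t_1)$ as integrals over the single set $A := \{x : t_1 < C(x) \le t_2\}$ and then control the likelihood ratio $f_1/f$ pointwise on $A$ using the monotonicity of the classifier. By the definitions \eqref{eq:Gt}--\eqref{eq:GLt} of $G$ and $G_L$ as the laws of $C(X)$ under $F$ and $F_1$,
\begin{align*}
G_L(t_2) - G_L(t_1) = \int_A f_1(x)\, d\mu(x), \qquad
G(t_2) - G(t_1) = \int_A f(x)\, d\mu(x),
\end{align*}
so the hypothesis $G(t_2)-G(t_1) > 0$ says exactly that $\int_A f\, d\mu > 0$ and the ratio in the statement is well defined.

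First I would invoke the identity $\frac{f_1(x)}{f(x)} = \frac{1-\pi}{\pi}\frac{C(x)}{1-C(x)}$ derived in Section \ref{sec:procedure}, or better its cleared-denominator form $\pi f_1(x)(1-C(x)) = (1-\pi)f(x)C(x)$, which holds without any division. Since $s \mapsto s/(1-s)$ is strictly increasing on $[0,1)$ and every $x \in A$ satisfies $t_1 < C(x) \le t_2$, I obtain the pointwise bounds
\begin{align*}
\frac{1-\pi}{\pi}\frac{t_1}{1-t_1}\, f(x) < f_1(x) \le \frac{1-\pi}{\pi}\frac{t_2}{1-t_2}\, f(x)
\end{align*}
on $\{x \in A : f(x) > 0\}$, with the convention $t_2/(1-t_2) = \infty$ covering $t_2 = 1$. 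Integrating over $A$ against $\mu$ and dividing by $\int_A f\, d\mu > 0$ yields both inequalities of the lemma at once.

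The only delicate points, which I would treat carefully rather than gloss over, are the boundary behaviors of $s \mapsto s/(1-s)$ and of the classifier, and the preservation of strictness. When $t_2 = 1$ the upper bound is $+\infty$ and holds vacuously; when $C(x) = 1$ (that is, $f(x) = 0$ but $f_1(x) > 0$) the ratio must be read through the form $\pi f_1(1-C) = (1-\pi)fC$, and such points only enlarge $\int_A f_1$ while contributing nothing to $\int_A f$, consistent with the upper bound. The main thing to get right is that the left inequality remains \emph{strict} after integration: because $C(x) > t_1$ strictly on $A$, the gap $f_1(x) - \frac{1-\pi}{\pi}\frac{t_1}{1-t_1}f(x)$ is strictly positive on $\{x \in A : f(x) > 0\}$, and $\int_A f\, d\mu > 0$ forces this set to have positive measure, so the integrated inequality cannot collapse to an equality. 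This is the step I expect to require the most care.
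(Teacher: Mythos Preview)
Your proposal is correct and follows essentially the same approach as the paper: both arguments write the increments as integrals of $f_1$ and $f$ over the set $\{x : t_1 < C(x) \le t_2\}$ and exploit the monotone correspondence between $C(x)$ and the likelihood ratio $f_1(x)/f(x)$ to obtain pointwise bounds that are then integrated. If anything, your treatment of the boundary cases ($t_2=1$, $f(x)=0$) and of why strictness survives integration is more explicit than the paper's, which simply asserts the strict inequality after bounding the denominator.
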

\begin{proof}
  The classifier is
  \begin{equation*}
    C(x) = \frac{\pi f_L(x)}{\pi f_L(x) + (1-\pi)f(x)} = \frac{1}{1 + \frac{1-\pi}{\pi}\frac{f(x)}{f_L(x)}}
  \end{equation*}
  Define $A_t = \{x : C(x) \leq t\} = \{x : \frac{1-t}{t}\frac{\pi}{1-\pi}f_L(x) \leq f(x)\}$. Therefore on the set $A_{t_2} \cap A_{t_1}^C$ we have
  \begin{equation*}
    \frac{1-t_2}{t_2}\frac{\pi}{1-\pi}f_L(x) \leq f(x) < \frac{1-t_1}{t_1}\frac{\pi}{1-\pi}f_L(x)
  \end{equation*}
  So
\begin{align*}
  \frac{G_L(t_2) - G_L(t_1)}{G(t_2) - G(t_1)} &= \frac{\int_{A_{t_2}\cap A_{t_1}^C} f_L(x)}{\int_{A_{t_2}\cap A_{t_1}^C} f(x)}\\
  &> \frac{\int_{A_{t_2}\cap A_{t_1}^C} f_L(x)}{\frac{1-t_1}{t_1}\frac{\pi}{1-\pi}\int_{A_{t_2}\cap A_{t_1}^C} f_L(x)}\\
  &=\frac{t_1}{1-t_1}\frac{1-\pi}{\pi}.
\end{align*}
We can obtain the upper bound in an identical manner.
\end{proof}

\begin{Lem} \label{lem:CVM_rate}
\begin{align*}
n^{\beta-\eta} d_n(G, G_n) &= o_P(1), \\
n^{\beta-\eta} d_n(G_L, G_{L,n}) &= o_P(1).
\end{align*} 
\end{Lem}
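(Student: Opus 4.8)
<br />

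The goal is to prove Lemma~\ref{lem:CVM_rate}, which states that $n^{\beta-\eta} d_n(G, G_n) = o_P(1)$ and similarly for the labeled version. Recalling the definition $d_n(g,h) = \sqrt{\int (g(t)-h(t))^2 \, dG_n(t)}$, the quantity $d_n(G,G_n)^2 = \int (G(t) - G_n(t))^2 \, dG_n(t)$ is an integral of the squared deviation between the distribution function and its empirical (approximate) counterpart. The plan is to bound the integrand uniformly using Theorem~\ref{thm:CD_donsker} and then observe that the integrating measure $dG_n$ has total mass one.

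\textbf{Main argument.} First I would note that by Theorem~\ref{thm:CD_donsker}, $\sup_t |G_n(t) - G(t)| = O_P(n^{-\beta})$, since $n^\beta(G_n(t) - G(t))$ is $O_P(1)$ uniformly in $t$. Therefore the integrand satisfies $(G(t) - G_n(t))^2 \leq \|G_n - G\|_\infty^2 = O_P(n^{-2\beta})$ for every $t$. Integrating against $dG_n$, whose total mass is $1$, gives
\begin{equation*}
d_n(G,G_n)^2 = \int (G(t) - G_n(t))^2 \, dG_n(t) \leq \|G_n - G\|_\infty^2 = O_P(n^{-2\beta}),
\end{equation*}
so $d_n(G,G_n) = O_P(n^{-\beta})$. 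Multiplying by $n^{\beta-\eta}$ yields $n^{\beta-\eta} d_n(G,G_n) = O_P(n^{-\eta})$, which is $o_P(1)$ because $\eta > 0$. The identical chain of inequalities, now invoking the second conclusion of Theorem~\ref{thm:CD_donsker} that $n^\beta(G_{L,n}(t) - G_L(t))$ is $O_P(1)$ uniformly in $t$, handles the labeled statement $n^{\beta-\eta} d_n(G_L, G_{L,n}) = o_P(1)$.

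\textbf{Where the work really sits.} The lemma itself is essentially a one-line consequence of the uniform rate in Theorem~\ref{thm:CD_donsker} combined with the fact that $\int dG_n = 1$; the only subtlety is to make sure the uniform $O_P(n^{-\beta})$ bound can be pulled out of the integral cleanly, which is legitimate precisely because the bound holds uniformly in $t$ and not merely pointwise. Thus I do not expect any genuine obstacle in this lemma: all the analytical heavy lifting — the decomposition into the DKW term $\mathbb{R}_n$ and the bias term $\mathbb{Q}_n$, and the delicate handling of the discontinuity of $C(X)$ via the density assumption — has already been absorbed into Theorem~\ref{thm:CD_donsker}. The remaining care is merely bookkeeping: confirming that $dG_n$ is a probability measure so that bounding the integrand uniformly immediately bounds the integral, and confirming that $\eta > 0$ forces the final expression to vanish.
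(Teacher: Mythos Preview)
Your proposal is correct and follows essentially the same approach as the paper: pull the uniform bound $\sup_t |G_n(t)-G(t)| = O_P(n^{-\beta})$ from Theorem~\ref{thm:CD_donsker} outside the integral, use that $dG_n$ has total mass one, and then absorb the extra factor $n^{-\eta}$ to obtain $o_P(1)$. The paper merely writes the same computation by moving $n^{\beta-\eta}$ inside the square root first, but the logic is identical.
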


\begin{proof}
\begin{align*}
n^{\beta-\eta} d_n(G, G_n) &= \sqrt{\int \left[\underbrace{n^{-\eta}}_{=o_P(1)}\underbrace{n^{\beta}\left(G_n(t) - G(t) \right)}_{=O_P(1)}\right]^2d G_n(t)},
\end{align*}
where $n^{\beta}\left(G_n(t) - G(t) \right) = O_P(1)$ uniformly, and then $n^{-\eta}n^{\beta}\left(G_n(t) - G(t) \right) = o_P(1)$ uniformly. Therefore
\begin{align*}
n^{\beta-\eta} d_n(G, G_n) &\leq \sup_t|n^{-\eta} n^{\beta}\left(G_n(t) - G(t) \right)| = o_P(1).
\end{align*}
The $G_L$, $G_{L,n}$ case can be proven in an identical manner.
\end{proof}
 
\begin{Lem} \label{lem:PS_distance}
For $1 \geq \gamma \geq \alpha_0$, 
\begin{align*}
\gamma d_n(\widehat{G}_{s,n}^\gamma,\check{G}_{s,n}^\gamma) \leq d_n(G, G_n) + (1-\gamma)d_n(G_L, G_{L,n}).
\end{align*}
Thus,
\begin{align*}
\gamma d_n(\widehat{G}_{s,n}^\gamma,\check{G}_{s,n}^\gamma) \rightarrow \begin{cases}
0 & \text{ if } \gamma \geq \alpha_0,\\ 
>0 & \text{ if } \gamma < \alpha_0 .
\end{cases}
\end{align*}
\end{Lem}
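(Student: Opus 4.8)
The plan is to prove the inequality by a projection argument and then read off the two limiting regimes. First I would introduce the population competitor $H^\gamma := (G-(1-\gamma)G_L)/\gamma$. For $\gamma \ge \alpha_0$ this $H^\gamma$ is a genuine c.d.f.: by the definition of $\alpha_0 = \alpha_0^G$ it is a convex combination of $G_L$ and the sub-distribution identified at $\gamma = \alpha_0$. Since $\check{G}_{s,n}^\gamma$ is by definition the $d_n$-closest c.d.f.\ to $\widehat{G}_{s,n}^\gamma$ and $H^\gamma$ is one admissible c.d.f., I immediately obtain $d_n(\widehat{G}_{s,n}^\gamma,\check{G}_{s,n}^\gamma) \le d_n(\widehat{G}_{s,n}^\gamma,H^\gamma)$.

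Next I would compute the right-hand side explicitly. Writing out the definitions gives $\gamma(\widehat{G}_{s,n}^\gamma - H^\gamma) = (G_n-G) - (1-\gamma)(G_{L,n}-G_L)$, so $\gamma\, d_n(\widehat{G}_{s,n}^\gamma,H^\gamma)$ equals the $L^2(G_n)$ norm of this difference. Because $d_n$ is the $L^2(G_n)$ norm of a difference it obeys the triangle inequality, and since $1-\gamma \ge 0$ I can split the norm to get $\gamma\, d_n(\widehat{G}_{s,n}^\gamma,H^\gamma) \le d_n(G,G_n) + (1-\gamma)d_n(G_L,G_{L,n})$. Chaining this with the minimizer bound yields the claimed inequality. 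The limit for $\gamma \ge \alpha_0$ is then immediate: Lemma \ref{lem:CVM_rate} gives $d_n(G,G_n) = o_P(1)$ and $d_n(G_L,G_{L,n}) = o_P(1)$, so the right-hand side, and hence $\gamma\, d_n(\widehat{G}_{s,n}^\gamma,\check{G}_{s,n}^\gamma)$, tends to $0$ in probability.

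The genuinely harder half is showing the limit is strictly positive when $\gamma < \alpha_0$. Here $H^\gamma$ fails to be a c.d.f.\ (by definition of $\alpha_0$ as the infimum), so its $L^2(G)$-distance $\delta_\gamma$ to the closed convex set of c.d.f.s is strictly positive, where I write $d(\cdot,\cdot)$ for the $L^2(G)$ analogue of $d_n$. The target is $d_n(\widehat{G}_{s,n}^\gamma,\check{G}_{s,n}^\gamma) \xrightarrow{p} \delta_\gamma > 0$, whence $\gamma\, d_n \to \gamma\delta_\gamma > 0$. For the upper bound I would feed the population projection $W^\gamma$ of $H^\gamma$ as a competitor and use the uniform convergence $\widehat{G}_{s,n}^\gamma \to H^\gamma$ (which follows from Theorem \ref{thm:CD_donsker}, since $G_n-G$ and $G_{L,n}-G_L$ are $O_P(n^{-\beta})$ uniformly) together with $G_n$ converging weakly to $G$ to pass $d_n(\widehat{G}_{s,n}^\gamma,W^\gamma) \to d(H^\gamma,W^\gamma)=\delta_\gamma$. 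The matching lower bound is the crux: along any subsequence the minimizers $\check{G}_{s,n}^\gamma$ are c.d.f.s, so I would extract, via Helly's selection theorem, a weakly convergent further subsequence with c.d.f.\ limit $W^\infty$; combined with $\widehat{G}_{s,n}^\gamma \to H^\gamma$ this forces the empirical distance to converge to $d(H^\gamma,W^\infty) \ge \delta_\gamma$, ruling out $\liminf < \delta_\gamma$.

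The main obstacle is this last compactness argument, and in particular justifying that $\int f_n^2\, dG_n \to \int f^2\, dG$ when the integrand $f_n \to f$ uniformly while the integrating measure $G_n$ is itself random and only converging weakly to $G$. I would control this by splitting the difference into $\int (f_n^2 - f^2)\,dG_n$, handled by uniform convergence and the boundedness of all c.d.f.s involved, plus $\int f^2\, dG_n - \int f^2\, dG$, handled by the uniform (hence weak) convergence of $G_n$ to $G$ supplied by Theorem \ref{thm:CD_donsker}. This is exactly where having both $G_n$ and $G_{L,n}$ as estimators, rather than the known null of \cite{patra2015estimation}, requires the new uniform-convergence input, so the argument parallels Patra and Sen's but substitutes Theorem \ref{thm:CD_donsker} for their empirical-process step.
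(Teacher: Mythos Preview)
Your derivation of the inequality and the $\gamma\ge\alpha_0$ limit is exactly the paper's argument: introduce the population candidate $G_s^\gamma=(G-(1-\gamma)G_L)/\gamma$, use that $\check G_{s,n}^\gamma$ is the $d_n$-minimizer over c.d.f.s so $d_n(\widehat G_{s,n}^\gamma,\check G_{s,n}^\gamma)\le d_n(\widehat G_{s,n}^\gamma,G_s^\gamma)$, expand and apply the triangle inequality, then invoke Lemma~\ref{lem:CVM_rate}.

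For $\gamma<\alpha_0$ you are considerably more ambitious than the paper, which simply asserts that since $\widehat G_{s,n}^\gamma\to G_s^\gamma$ pointwise and $G_s^\gamma$ is not a c.d.f.\ while $\check G_{s,n}^\gamma$ always is, the distance ``would converge to some positive constant.'' Your plan to identify the limit as $\gamma\delta_\gamma$, with $\delta_\gamma$ the $L^2(G)$-distance from $H^\gamma$ to the set of c.d.f.s, and to sandwich via a population projection and a Helly subsequence, is the right level of rigor and buys you an explicit limit the paper does not provide.

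There is, however, a gap in the lower-bound step. Helly's theorem only gives $\check G_{s,n}^\gamma\to W^\infty$ at continuity points of $W^\infty$, not uniformly. So when you write in the last paragraph that ``$f_n\to f$ uniformly,'' this holds for the upper bound (where $W^\gamma$ is fixed) but fails for the lower bound, where $f_n=\widehat G_{s,n}^\gamma-\check G_{s,n}^\gamma$ and the second piece converges only weakly. Your splitting $\int(f_n^2-f^2)\,dG_n+\int f^2\,d(G_n-G)$ therefore does not handle the lower bound as stated. One clean repair avoids Helly: by the triangle inequality $d_n(\widehat G_{s,n}^\gamma,\check G_{s,n}^\gamma)\ge \inf_W d_n(H^\gamma,W)-d_n(\widehat G_{s,n}^\gamma,H^\gamma)$, and the second term is $o_P(1)$; for the first, note that $\{(H^\gamma-W)^2:W\text{ a c.d.f.\ on }[0,1]\}$ has uniformly bounded total variation (since $H^\gamma$ is a fixed difference of monotone functions and each $W$ is monotone), so integration by parts against $\sup_t|G_n(t)-G(t)|\to 0$ from Theorem~\ref{thm:CD_donsker} gives $\sup_W\bigl|d_n(H^\gamma,W)^2-d(H^\gamma,W)^2\bigr|\to 0$, whence $\inf_W d_n(H^\gamma,W)\to\delta_\gamma$.
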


\begin{proof}
Let 
\begin{align*}
G_s^\gamma = \frac{G-(1-\gamma)G_L}{\gamma}.
\end{align*}
If $\gamma \geq \alpha_0$, then 
\begin{align*}
\gamma d_n(\widehat{G}_{s,n}^\gamma,\check{G}_{s,n}^\gamma) &\leq \gamma d_n(\widehat{G}_{s,n}^\gamma,G_{s}^\gamma) \\
&\leq d_n(G, G_n) + (1-\gamma)d_n(G_L, G_{L,n}).
\end{align*}
The first inequality holds by the definition of $\check{G}_{s,n}^\gamma$ due to the fact that $G_s^\gamma$ is a valid CDF when $1\geq \gamma \geq \alpha_0$, and the second inequality is due to triangle inequality.

Now we prove the limit property of $\gamma d_n(\widehat{G}_{s,n}^\gamma,\check{G}_{s,n}^\gamma)$. If $\gamma \geq \alpha_0$, then $\gamma d_n(\widehat{G}_{s,n}^\gamma,\check{G}_{s,n}^\gamma) \rightarrow 0$ since $d_n(G, G_n) \rightarrow 0$ and $d_n(G_L, G_{L,n}) \rightarrow 0$ by Lemma \ref{lem:CVM_rate}. If $\gamma < \alpha_0$, by the definition of $\alpha_0^G$, $G_s^\gamma$ is not a valid c.d.f.. Pointwise, $\widehat{G}_{s,n}^\gamma \rightarrow G_s^\gamma$. So for large $n$, $\widehat{G}_{s,n}^\gamma$ is not valid c.d.f., while $\check{G}_{s,n}^\gamma$ is always a c.d.f.. So $\gamma d_n(\widehat{G}_{s,n}^\gamma,\check{G}_{s,n}^\gamma) $ would converge to some positive constant.
\end{proof}

\begin{Lem} \label{lem:PS_convex}
$B_n := \{\gamma \in [0,1] : n^{\beta-\eta} \gamma d_n(\widehat{G}_{s,n}^\gamma,\check{G}_{s,n}^\gamma) \leq c_n \}$ is convex. Thus, $B_n = (\widehat{\alpha}_0^{c_n},1]$ or $B_n = [\widehat{\alpha}_0^{c_n},1]$.
\end{Lem}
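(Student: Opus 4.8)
The plan is to show that the set $B_n$ is an interval of the stated form by establishing that the function $\gamma \mapsto \gamma d_n(\widehat{G}_{s,n}^\gamma,\check{G}_{s,n}^\gamma)$ is convex and non-increasing in $\gamma$, and then reading off the consequences for the sublevel set. Convexity immediately gives that a sublevel set of the form $\{\gamma : n^{\beta-\eta} \gamma d_n(\widehat{G}_{s,n}^\gamma,\check{G}_{s,n}^\gamma) \leq c_n\}$ is convex, hence an interval; the non-increasing property then forces that interval to be a right-end segment $[\,\widehat{\alpha}_0^{c_n},1]$ or $(\,\widehat{\alpha}_0^{c_n},1]$, since if $\gamma$ lies in $B_n$ then so does every $\gamma' \geq \gamma$.

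First I would reduce the problem to a statement about $\widehat{G}_{s,n}^\gamma = (G_n - (1-\gamma)G_{L,n})/\gamma$, observing that $\gamma \widehat{G}_{s,n}^\gamma = G_n - (1-\gamma)G_{L,n}$ is \emph{affine} in $\gamma$ (its $t$-pointwise value is a linear function of $\gamma$). The quantity $\gamma d_n(\widehat{G}_{s,n}^\gamma,\check{G}_{s,n}^\gamma)$ equals the $L^2(dG_n)$-distance from the affine-in-$\gamma$ function $\gamma\widehat{G}_{s,n}^\gamma$ to the convex cone of monotone (c.d.f.) functions scaled by $\gamma$. More precisely, I would write
\begin{equation*}
\gamma\, d_n(\widehat{G}_{s,n}^\gamma,\check{G}_{s,n}^\gamma) = \inf_{W \text{ c.d.f.}} \, d_n\bigl(\gamma\widehat{G}_{s,n}^\gamma,\, \gamma W\bigr) = \inf_{V \in \mathcal{K}_\gamma} d_n\bigl(G_n - (1-\gamma)G_{L,n},\, V\bigr),
\end{equation*}
where $\mathcal{K}_\gamma = \{\gamma W : W \text{ a c.d.f.}\}$. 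This is the infimal form used in \cite{patra2015estimation}, and it is the representation that makes convexity transparent.

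The key step is convexity. I would argue that for $\gamma_1,\gamma_2$ and $\lambda\in[0,1]$ with $\bar\gamma = \lambda\gamma_1 + (1-\lambda)\gamma_2$, if $V_1,V_2$ are near-optimal cone elements (i.e. $V_i$ approximately realizing the infimum for $\gamma_i$), then $\bar V := \lambda V_1 + (1-\lambda)V_2$ is a valid competitor for $\bar\gamma$, because the scaled-c.d.f. cone is closed under this convex combination and the target $G_n - (1-\bar\gamma)G_{L,n}$ is the corresponding convex combination of the targets $G_n - (1-\gamma_i)G_{L,n}$. Convexity of the map $h \mapsto \inf_{V} d_n(h,V)$ over a convex set, combined with the affineness of the target in $\gamma$, then yields
\begin{equation*}
\bar\gamma\, d_n(\widehat{G}_{s,n}^{\bar\gamma},\check{G}_{s,n}^{\bar\gamma}) \leq \lambda\,\gamma_1 d_n(\widehat{G}_{s,n}^{\gamma_1},\check{G}_{s,n}^{\gamma_1}) + (1-\lambda)\,\gamma_2 d_n(\widehat{G}_{s,n}^{\gamma_2},\check{G}_{s,n}^{\gamma_2}).
\end{equation*}
For the non-increasing property I would note that for $\gamma_1<\gamma_2$, a feasible competitor at $\gamma_2$ can be built from the optimizer at $\gamma_1$, exploiting that enlarging $\gamma$ only makes it easier for the normalized difference to be a c.d.f.; this matches the monotonicity claim for $\gamma d_n$ already cited from \cite{patra2015estimation} in the discussion following Theorem \ref{thm:PS_consistency}.

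I expect the main obstacle to be verifying convexity rigorously at the level of the isotonic/cone projection rather than just asserting it: one must confirm that the closest-c.d.f. operation interacts correctly with convex combinations, i.e. that $\lambda V_1 + (1-\lambda)V_2$ really is of the form $\bar\gamma W$ for a genuine c.d.f. $W$ (monotonicity and the right boundary values are preserved under convex combination, which is the crucial check) and that the $d_n$ metric, being an $L^2(dG_n)$ norm, is itself convex in its first argument so that the combined bound goes through. Since the measure $G_n$ in the metric does not depend on $\gamma$, this last point is clean; the only care needed is the affine-in-$\gamma$ bookkeeping of the target function and the closedness of the scaled-c.d.f. cone under the convex combination. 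Once convexity and monotonicity are in hand, the structural conclusion $B_n = (\widehat{\alpha}_0^{c_n},1]$ or $[\widehat{\alpha}_0^{c_n},1]$ follows immediately, with the open-versus-closed distinction determined by whether the infimum defining $\widehat{\alpha}_0^{c_n}$ is attained.
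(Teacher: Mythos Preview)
Your approach is correct and is essentially the same as the paper's: both exploit that $\gamma\,\widehat G_{s,n}^\gamma = G_n-(1-\gamma)G_{L,n}$ is affine in $\gamma$ and that the weighted combination $\frac{\lambda\gamma_1}{\bar\gamma}\check G_{s,n}^{\gamma_1}+\frac{(1-\lambda)\gamma_2}{\bar\gamma}\check G_{s,n}^{\gamma_2}$ is a genuine c.d.f.\ and hence a valid competitor at $\bar\gamma$, yielding the convexity inequality via the triangle inequality for $d_n$. The one simplification you can borrow from the paper: rather than proving the map is non-increasing to force $B_n$ to be a right-end interval, it suffices to note that $1\in B_n$ (since $\widehat G_{s,n}^1=G_n$ is already a c.d.f., so $d_n(\widehat G_{s,n}^1,\check G_{s,n}^1)=0$); convexity plus $1\in B_n$ already gives $B_n=(\widehat\alpha_0^{c_n},1]$ or $[\widehat\alpha_0^{c_n},1]$.
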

\begin{proof}
Obviously, $1 \in B_n$.
Assume $\gamma_1 \leq \gamma_2$ from $B_n$, let $\gamma_3 = \xi \gamma_1 + (1-\xi)\gamma_2$, where $\xi \in [0,1]$. Then by definition of $\widehat{G}_{s,n}^\gamma$,
\begin{align*}
\xi \gamma_1 \widehat{G}_{s,n}^{\gamma_1} + (1-\xi) \gamma_2 \widehat{G}_{s,n}^{\gamma_2} = \gamma_3 \widehat{G}_{s,n}^{\gamma_3}.
\end{align*}
Note that $\frac{1}{\gamma_3}\left( \xi \gamma_1 \check{G}_{s,n}^{\gamma_1} + (1-\xi) \gamma_2 \check{G}_{s,n}^{\gamma_2} \right)$ is a valid c.d.f. We have $\gamma_3 \in B_n$ because
\begin{align*}
&d_n(\widehat{G}_{s,n}^{\gamma_3},\check{G}_{s,n}^{\gamma_3}) \\
& \leq d_n\left( \widehat{G}_{s,n}^{\gamma_3}, \frac{1}{\gamma_3}\left( \xi \gamma_1 \check{G}_{s,n}^{\gamma_1} + (1-\xi) \gamma_2 \check{G}_{s,n}^{\gamma_2} \right)\right) \\
&= d_n\left(\frac{1}{\gamma_3}\left( \xi \gamma_1 \widehat{G}_{s,n}^{\gamma_1} + (1-\xi) \gamma_2 \widehat{G}_{s,n}^{\gamma_2} \right), \frac{1}{\gamma_3}\left( \xi \gamma_1 \check{G}_{s,n}^{\gamma_1} + (1-\xi) \gamma_2 \check{G}_{s,n}^{\gamma_2} \right)\right) \\
&\leq \frac{\xi\gamma_1}{\gamma_3} d_n(\widehat{G}_{s,n}^{\gamma_1},\check{G}_{s,n}^{\gamma_1}) +  \frac{(1-\xi)\gamma_2}{\gamma_3} d_n(\widehat{G}_{s,n}^{\gamma_2},\check{G}_{s,n}^{\gamma_2}) \\
&\leq \frac{\xi\gamma_1}{\gamma_3} \frac{c_n}{n^{\beta-\eta}\gamma_1} +  \frac{(1-\xi)\gamma_2}{\gamma_3} \frac{c_n}{n^{\beta-\eta}\gamma_2} = \frac{c_n}{n^{\beta-\eta}\gamma_3}.
\end{align*}
\end{proof}

\end{document}